\documentclass[12pt]{amsart}
\usepackage[noadjust]{cite}
\usepackage[final]{graphicx}
\usepackage[lmargin=1in,rmargin=1in,tmargin=1in,bmargin=1in,letterpaper]{geometry}
\usepackage{booktabs}
\usepackage{url}
\usepackage[T1]{fontenc}
\usepackage{lmodern}
\usepackage[utf8]{inputenc}
\usepackage[bookmarks=true,%
    colorlinks=true,%
    linkcolor=blue,%
    citecolor=blue,%
    filecolor=blue,%
    menucolor=blue,%
    urlcolor=blue]{hyperref}
\hypersetup{pdftitle={On affine rigidity}}
\hypersetup{pdfauthor={Steven J. Gortler, Craig Gotsman, Ligang Liu,
    and Dylan P. Thurston}}
\usepackage{ifpdf}

\usepackage{caption}
\usepackage{hyphenat}
\usepackage{tikz}
\usetikzlibrary{matrix,arrows}
% Style for labels on arrows in commutative diagrams
\tikzset{cdarrow/.style={auto,
    execute at begin node=$\scriptstyle,execute at end node=$}}

\setlength{\marginparwidth}{1.2in}

% Read the file, if it exists
\newread\testin

\graphicspath{{draws/}}

\ifpdf
  \let\textalt\texorpdfstring
\else
  \newcommand{\textalt}[2]{#1}
\fi

%%% Local Variables: 
%%% mode: latex
%%% TeX-master: "westbury"
%%% End: 

% General use
\newcommand{\RR}{\mathbb R}

\newcommand{\NN}{\mathbb N}
\newcommand{\QQ}{\mathbb Q}

\newcommand{\EE}{\mathbb E}

\newcommand{\abs}[1]{{\lvert #1 \rvert}}

% Synonyms for commands I never remember

% Stupid overloading.

% Various operators.
\DeclareMathOperator{\Eucl}{Eucl}

\DeclareMathOperator{\Aff}{Aff}

\DeclareMathOperator{\rank}{rank}

 % Secant variety
\DeclareMathOperator{\adj}{adj} % adjoint matrix
 % convex hull

% Linear groups

% Special knots

% Theorems
\theoremstyle{plain}
\newtheorem{theorem}{Theorem}
\newtheorem{proposition}{Proposition}[section]
\newtheorem{lemma}[proposition]{Lemma}
\newtheorem{corollary}[proposition]{Corollary}

\theoremstyle{definition}
\newtheorem{definition}[proposition]{Definition}

\theoremstyle{remark}

\newtheorem{remark}[proposition]{Remark}
%\newtheorem{apology}[proposition]{Apology}
%\newtheorem{warning}[proposition]{Warning}

% For Universal Rigidity paper
\newcommand{\Edges}{{\mathcal E}}
\newcommand{\Verts}{{\mathcal V}}

\newcommand{\cM}{\mathcal{M}}

\newcommand{\hC}{\hat{C}}
\newcommand{\hb}{\hat{b}}
\newcommand{\tm}{\tilde{m}}
\newcommand{\tM}{\tilde{M}}

\newcommand{\prho}{p}
\newcommand{\psigma}{q}

%%% Local Variables: 
%%% mode: latex
%%% TeX-master: "main"
%%% TeX-master: t
%%% End: 

\begin{document}
\title{On affine rigidity}

\author[Gortler]{Steven J. Gortler}
\author[Gotsman]{Craig Gotsman}
\thanks{SJG and CG were partially supported by United
  States--Israel Binational Science Foundation grant 2006089.}
\author[Liu]{Ligang Liu}
\thanks{LL was partially supported 
by the National Natural Science Foundation of China (61070071).}
\author[Thurston]{Dylan P. Thurston}
\thanks {DPT was supported by the Mathematical
  Sciences Research Institute and a Sloan Research Fellowship.}

\begin{abstract}

We define the notion of
affine rigidity of a
hypergraph and  prove a variety of fundamental results for this 
notion.
First, we
show that affine rigidity can be determined by the rank of a 
specific matrix which implies that affine rigidity
is a generic property of the hypergraph.
Then we prove
that if 
a graph is is $(d+1)$-vertex-connected, then it must be
``generically neighborhood affinely rigid'' in $d$-dimensional
space.
This implies that if a graph is $(d+1)$-vertex-connected then 
any generic framework of its squared graph must be
universally rigid.

Our results, and affine rigidity more generally, have natural applications
in point registration and localization, as well as connections
to manifold learning.
\end{abstract}

\date{\today}

%\primaryclass{}
%\secondaryclass{}
%\keywords{}

\maketitle

%\tableofcontents

\section{Introduction}

Suppose one has a number of overlapping ``scans'' of a set of points in 
some space, and that the corresponding points shared between scans have
been identified. One naturally may want to register these scans and
merge them together into a single 
configuration~\cite[inter alia]{turk1994}. 
Such a merging
problem is called a \emph{realization problem}. 
The study of the uniqueness of the solutions to such  realization problems
is known as  \emph{rigidity}.

We model the combinatorics of this problem using a hypergraph
$\Theta$, with vertices representing the points, and hyperedges representing
the sets of points in each scan. The geometry of the problem is 
modeled with a configuration $\prho$, associating each vertex with
 a point in 
space. 

One natural setting is the Euclidean setting, where
the scans are known to be related by Euclidean transforms.
In this case it is sufficient to study just the case of a graph,
where we think of each edge as 
its own scan with only $2$ points.
Unfortunately,
many of the Euclidean problems are NP-HARD~\cite{Saxe79:EmbedGraphsNP}.
In this paper, we study what happens when one relaxes the problem to the
affine setting, that is, one assumes that the scans are 
known to be related by affine transforms. Under this relaxation,
much of the analysis reduces to linear algebra, and uniqueness questions
reduce to rank calculations.
We prove
a variety of fundamental results about this type of rigidity and also
place it in the context of other rigidity classes such as global rigidity
and universal rigidity.

We also specifically investigate the case of hypergraphs $\Theta$
that arises
by starting with an input graph $\Gamma$, and considering each one-ring
(a vertex and its neighbors)
in $\Gamma$ as a hyperedge in $\Theta$. 
We call such a hypergraph the neighborhood hypergraph of $\Gamma$.
Such neighborhood hypergraphs naturally arise
when studying molecules~\cite{lee:PhDThesis:2008}, when applying a divide and conquer approach
to sensor network localization~\cite{singer2008remark} and in machine 
learning~\cite{roweis2000nonlinear}.

\subsection{Summary of Results}
We start by describing 
how 
affine rigidity in $\RR^d$ is fully characterized by the kernel size of
one of its associated ``affinity matrices''.
(This result was first shown by Zha and Zhang ~\cite{ZZ09}.)
We show how this 
implies a number of interesting corollaries including the fact
that
affine rigidity is a generic property. That is,
given a hypergraph~$\Theta$ and dimension~$d$, either all generic
embeddings of $\Theta$ are 
affinely rigid in $\RR^d$ or all
generic embeddings are  affinely flexible in $\RR^d$. The specific
geometric positions of the vertices are  irrelevant to this
property, as long as they are in sufficiently general position.
Thus we call such a hypergraph  \emph{generically affinely
rigid} in $\RR^d$.

Next we relate affine rigidity in $\RR^d$
to the 
related notion of universal Euclidean rigidity.
A framework is universally Euclideanly rigid
if it is rigid (in the Euclidean sense) 
in \emph{any} dimension.
In this context, we prove that affine rigidity in $\RR^d$ implies 
universal Euclidean rigidity.

We then prove the following sufficiency
result: if a graph $\Gamma$ is $d+1$
(vertex) connected, then its neighborhood hypergraph is  
generically 
 affinely rigid in $\RR^d$; 
 alternatively, we say that the graph $\Gamma$ itself is 
generically neighborhood affinely rigid in $\RR^d$.
In particular we will show that almost every
\emph{non-symmetric equilibrium stress matrix}
for any generic embedding of $\Gamma$ in $\RR^d$
will have co-rank $d+1$ (i.e., rank $v-d-1$). 

Putting these two results together, we show that if a graph is 
$d+1$ connected, then any generic embedding of its square graph
in $\RR^d$ is universally rigid. This result is interesting, as very few
families of graphs have been proven to be generically universally rigid.

We give examples showing that many of the implications
proved in this paper do not reverse.

Finally we discuss some of the motivating applications.

The main properties
of frameworks of graphs and their implications 
proven in this paper are summarized
below.

\begin{tikzpicture}
  \matrix(spaces) [matrix of math nodes, column sep=0.5cm, row sep=1cm]
  {
  \textbf{GGR} & \textbf{DP1C} & \textbf{GNSESM} & 
\textbf{GNAR}  & \textbf{GNUR} & \textbf{GNGR}  \\
  };
  \draw[->] (spaces-1-1) -- (spaces-1-2);
  \draw[->] (spaces-1-2) -- (spaces-1-3);
  \draw[->] (spaces-1-3) -- (spaces-1-4);
  \draw[->] (spaces-1-4) -- (spaces-1-5);
  \draw[->] (spaces-1-5) -- (spaces-1-6);
\end{tikzpicture}

\centerline{
\begin{tabular}{rl}
  \toprule
  Property & Graph \ldots
  \\ \midrule
  \textbf{GGR} & is generically  globally rigid in $\RR^d$\\
  \textbf{DP1C} & is $d+1$ connected (\cite{Hendrickson92:ConditionsUniqueGraph})\\
  \textbf{GNSESM} & generically has  non-symmetric equilibrium stress matrix 
    of rank \\
    &\qquad $v-d-1$ (Proposition~\ref{prop:nss}) \\
  \textbf{GNAR} & is generically neighborhood affine rigid in $\RR^d$
(Proposition~\ref{prop:ar})
\\
  \textbf{GNUR} & is generically neighborhood universally rigid in $\RR^d$
(Corollary~\ref{cor:ur})
\\
  \textbf{GNGR} & is generically neighborhood globally  rigid in $\RR^d$ (by definition)
\\ \bottomrule 
\end{tabular}}

%%% Local Variables: 
%%% mode: latex
%%% TeX-master: "AffineRigidity"
%%% End: 

\section{A rigidity zoo}
\label{sec:rigidity-zoo}

In this paper we will consider several different rigidity theories.
They all fit in to a unifying framework, which we now explain.

Most generally, rigidity questions
(of any type) ask if  all of the geometric
information about a set of points is  determined by
information from small subsets.  In the usual Euclidean rigidity
problem, we measure the distances between pairs of points.  However, in
other cases it is not enough to consider pairs of points for the small
subsets; as such, we need to consider hypergraphs rather than just
graphs.

\begin{definition}
  A \emph{hypergraph}~$\Theta$ is a set of $v$ vertices $\Verts(\Theta)$
  and $h$ hyperedges~$\Edges(\Theta)$, where $\Edges(\Theta)$ is a set of
  subsets of $\Verts(\Theta)$.  We will typically write just $\Verts$
  or $\Edges$, dropping
  the hypergraph~$\Theta$ from the notation.
\end{definition}

There are natural ways to pass from a hypergraph back and forth to a graph.

\begin{definition}
Given a hypergraph $\Theta$, define its 
\emph{body graph} $B(\Theta)$
as follows. For each vertex in $\Theta$, we have a vertex in
$B(\Theta)$. For each hyperedge $h$ in $\Theta$
and each pair of vertices in $h$
we have an edge in $B(\Theta)$.
\end{definition}

See Figure~\ref{fig:hyper-bodygraph} for an example.
\begin{figure}
  \centering
  \includegraphics{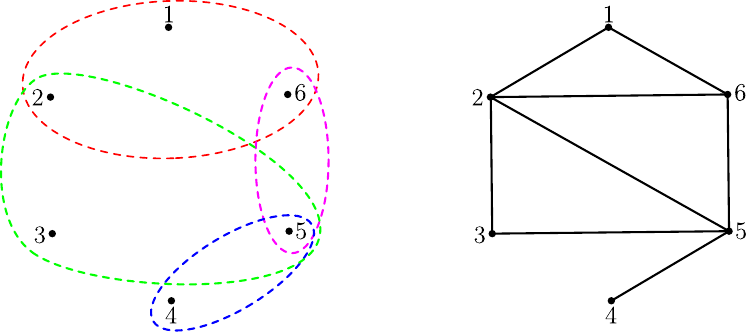}
  \caption{Left: A hypergraph with 6 vertices and 4 hyperedges. Each
    hyperedge is represented by a dotted ellipse enclosing a set of
    vertices. The hyperedges are: red $\{1,2,6\}$, green $\{2,3,5\}$,
    purple $\{5,6\}$, blue $\{4,5\}$.
    Right: The body graph of the hypergraph shown in the left.}
  \label{fig:hyper-bodygraph}
\end{figure}

\begin{definition}
  Given a graph $\Gamma$, define its 
\emph{neighborhood  hypergraph},
written as $N(\Gamma)$ as follows.
  For each vertex in $\Gamma$, we have an associated vertex in $N(\Gamma)$.
  For each vertex in $\Gamma$ we have a hyperedge in $N(\Gamma)$ consisting
of that vertex and its neighbors in $\Gamma$.
\end{definition}

\begin{definition}
Given a graph~$\Gamma$, its 
\emph{squared graph} $\Gamma^2$ is obtained
by adding to $\Gamma$ an edge between two vertices $i$ and $j$ if
$i$ and $j$ share some neighbor vertex $k$.
\end{definition}

\begin{lemma}\label{lem:square-nbhd}
  For any graph~$\Gamma$, $B(N(\Gamma)) = \Gamma^2$.
\end{lemma}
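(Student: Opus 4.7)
The plan is pure definition-chasing: we verify the edge sets of $B(N(\Gamma))$ and $\Gamma^2$ coincide by checking both inclusions.

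For the forward inclusion, I would take an edge $\{i,j\}$ of $B(N(\Gamma))$. By the definition of the body graph, $i$ and $j$ must both lie in some hyperedge of $N(\Gamma)$. By the definition of the neighborhood hypergraph, every hyperedge of $N(\Gamma)$ has the form $\{k\} \cup N_\Gamma(k)$ for some vertex $k$ of $\Gamma$, where $N_\Gamma(k)$ denotes the neighbors of $k$ in $\Gamma$. I would then split into cases: if one of $i,j$ equals $k$, then the other is a neighbor of $k$ in $\Gamma$, so $\{i,j\}$ is already an edge of $\Gamma$ and hence of $\Gamma^2$; otherwise both $i$ and $j$ are neighbors of $k$, so they share the common neighbor $k$ and $\{i,j\}$ is in $\Gamma^2$ by definition.

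For the reverse inclusion, I would take an edge $\{i,j\}$ of $\Gamma^2$ and again split into two cases. If $\{i,j\}$ is already an edge of $\Gamma$, then $i \in N_\Gamma(j)$, so both $i$ and $j$ lie in the hyperedge $\{j\} \cup N_\Gamma(j)$ of $N(\Gamma)$, and hence $\{i,j\}$ is an edge of $B(N(\Gamma))$. If instead $i$ and $j$ share a common neighbor $k$ in $\Gamma$, then both lie in the hyperedge $\{k\} \cup N_\Gamma(k)$, and again $\{i,j\}$ is an edge of $B(N(\Gamma))$.

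There is no real obstacle here — the lemma is essentially a tautology once the three constructions $B$, $N$, and $\Gamma \mapsto \Gamma^2$ are unwound in parallel. The only mild subtlety is to remember that the hyperedge associated to a vertex $k$ in $N(\Gamma)$ contains $k$ itself in addition to its neighbors, so that original edges of $\Gamma$ are recovered (and not just the "distance-two" edges) when passing to $B(N(\Gamma))$; this is what makes the two case splits line up exactly.
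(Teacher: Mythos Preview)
Your proof is correct and is exactly the definition-unwinding the paper has in mind; the paper simply declares it ``immediate from the definitions'' without writing out the two inclusions. Your explicit case analysis (distinguishing whether or not the center vertex $k$ of the hyperedge coincides with $i$ or $j$) is the right way to make that immediacy precise, and your remark about the hyperedge containing $k$ itself is the only point where one could conceivably slip.
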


\begin{proof}
  Immediate from the definitions.  (See Figure~\ref{fig:neighbor-hyper}
  for an example.)
\end{proof}

\begin{figure}
  \centering
  \includegraphics{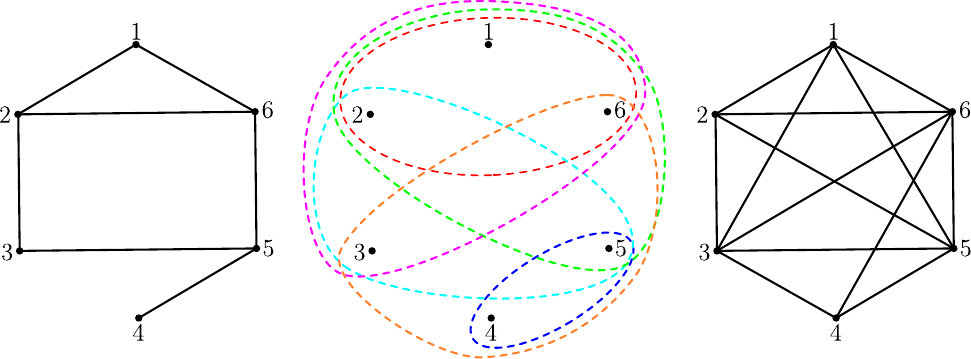}
  \caption{Left: A graph with 6 vertices and 7 edges. 
    Middle: Its neighborhood hypergraph with 6 vertices and 6
    hyperedges: red $\{1,2,6\}$, purple $\{1,2,3,6\}$, light blue
    $\{2,3,5\}$, dark blue $\{4,5\}$, orange $\{3,4,5,6\}$, green
    $\{1,2,5,6\}$.
    Right: The body graph of the hypergraph in the middle. It is also
    the squared graph of the graph in the left.}
  \label{fig:neighbor-hyper}
\end{figure}

\begin{definition}
  A \emph{$k$-hypergraph}~$\Theta$ is a hypergraph where each hyperedge has
  exactly $k$ vertices.
For any $k\in \NN$ and hypergraph~$\Theta$, let $B_k(\Theta)$ be the
$k$-hypergraph whose hyperedges are all the subsets $S$
of size~$k$ of vertices that are contained together in
at least one hyperedge
of $\Theta$:
\[
\Edges(B_k(\Theta)) = \{\, S \mid \exists h \in \Edges(\Theta),
  S \subset h, \abs{S} = k\,\}.
\]
For a vertex set~$S$, the \emph{complete $k$-hypergraph on $S$},
written $K_k(S)$, is the $k$-hypergraph whose hyperedges are all
$\binom{\abs{S}}{k}$ subsets of $S$ of size~$k$.
\end{definition}

For instance, a $2$-hypergraph is a graph, and $B_2(\Theta)$ is just the
body graph~$B(\Theta)$.

\begin{definition}
  A \emph{configuration} of the vertices $\Verts(\Theta)$ of a
  hypergraph in $\RR^d$ is 
  a mapping~$p$ from $\Verts(\Theta)$ to~$\RR^d$.
Let $C^d(\Verts)$ 
be the space of configurations in $\RR^d$.  
For $p \in C^d(\Verts)$ and $u
  \in \Verts(\Theta)$, 
we write $p(u) \in \RR^d$ for the image of $u$ under~$p$.

A \emph{framework} $\rho = (p,\Theta)$ of a hypergraph is the pair of a 
hypergraph and a configuration
of its vertices.  $C^d(\Theta)$ is the space of frameworks $(p,
\Theta)$ with hypergraph $\Theta$ and $p \in C^d(\Verts(\Theta))$.
We may also
write $\rho(u)$ for $p(u)$ where $\rho = (p,\Theta)$ is a framework of the
configuration~$p$.
\end{definition}

A framework of a hypergraph has also been called a body-and-joint 
framework~\cite{whiteley1989matroid}
and a body-and-multipin
framework~\cite{lee:PhDThesis:2008}.

\begin{definition}
  Let $M$ be a monoid  acting on $\RR^d$, such as 
  $\Eucl(d)$, the Euclidean isometries of $\RR^d$.
(We study monoids instead of groups since we don't want to 
restrict ourselves to always having inverses. In particular in
the case of affine rigidity, we wish to allow singular affine 
transforms as well).
The framework $\rho \in C^d(\Theta)$ 
is 
\emph{$M$-preequivalent}
to the framework $\sigma \in C^d(\Theta)$ 
if for each hyperedge $h \in \Edges(\Theta)$, the positions of the
vertices in $\rho$ can be mapped to their positions in~$\sigma$ by
an element of~$M$ depending only on~$h$.  That is, for each $h \in
\Edges$, there is a
$g_h \in M$ so that for each $u \in
h$, we have $g_h(\rho(u)) = \sigma(u)$.

The configuration $p \in C^d(\Verts)$ 
is 
\emph{$M$-precongruent}
to the configuration  $q \in C^d(\Verts)$ 
if the positions of
all the vertices in $p$ can be mapped to their positions in
$q$ by a single element $g \in M$ (not depending on~$h$).
When the configuration $p$ is $M$-precongruent to $q$, we also say
that the framework $(p,\Theta)$
is $M$-precongruent to $(q,\Theta)$.

A framework
$\rho  \in C^d(\Theta)$ is
\emph{globally $M$-rigid} if for any
other framework $\sigma  \in C^d(\Theta)$ to which 
$\rho$ 
is $M$-preequivalent, 
we also have that 
$\rho$ is 
$M$-precongruent to
 $\sigma$.
Otherwise we say
that $\rho$ is globally $M$-flexible in $\RR^d$.

Similarly, a framework~$\rho \in C^d(\Theta)$ is
\emph{locally $M$-rigid} in $\RR^d$ if there is a small
neighborhood~$U$ of $\rho$ in $C^d(\Theta)$ so that for any
 $\sigma  \in U$ to which 
$\rho$,
is $M$-preequivalent,
we also have that 
$\rho$ is $M$-precongruent to
$\sigma$.
Otherwise we say
that $\rho$ is locally $M$-flexible in~$\RR^d$.
\end{definition}

\begin{remark}
When there are non-invertible elements of
$M$, then neither 
$M$-preequivalence nor 
$M$-precongruence
is a symmetric relation.
When $M$ is a group, then $M$-preequivalence is a symmetric relation
and can be called $M$-equivalence, and
likewise 
$M$-precongruence can be called
$M$-congruence. 
\end{remark}

A related notion of group based rigidity has been explored in the
computer aided design literature~\cite{SchreckM06}.

In this paper, we are mainly
concerned with the cases when $M$ is either $\Eucl(d)$ or $\Aff(d)$, 
the set  of all (including singular)
affine linear maps of $\RR^d$, in which case we speak about
\emph{Euclidean} or \emph{affine} rigidity, respectively.  But there
are other interesting
possibilities, like projective transformations.
Another interesting case is when $M$ consists of the dilations and
translations of $\RR^d$ (with no rotations); this gives the theory
of parallel-line redrawings~\cite{whiteley1989matroid}.

In this terminology, 
Euclidean rigidity is the default: if $M$ is not
specified, it is the Euclidean group.
In much of the rigidity literature, 
local rigidity is the default, and the qualifier ``local'' is
dropped. However, in this paper this distinction is important and we
will write ``local'' or ``global'' when the distinction is meaningful.

\begin{lemma}\label{lem:hyper-er}
  A framework $(p, \Theta)$ is locally (resp.\ globally) Euclideanly rigid
  iff the body framework $(p, B(\Theta))$ is locally (resp.\ globally)
  Euclideanly rigid.
\end{lemma}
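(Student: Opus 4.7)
The plan is to prove the lemma by showing that $\Eucl(d)$-equivalence and $\Eucl(d)$-congruence define the same relations on the common configuration space, after which local and global rigidity coincide definitionally. Observe first that $\Verts(\Theta) = \Verts(B(\Theta))$, so $C^d(\Theta)$ and $C^d(B(\Theta))$ are literally the same space of frameworks (only the hyperedge/edge data differs), and local neighborhoods of $\rho = (p,\Theta)$ match those of $(p, B(\Theta))$ tautologically. The notion of $\Eucl(d)$-congruence depends only on the vertex set, not on edges or hyperedges, so it is also identical on the two sides. Hence only $\Eucl(d)$-equivalence needs argument.

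I would do the easy direction first. If $(p,\Theta) \sim (q,\Theta)$ via hyperedge isometries $g_h$, then for each $\{u,v\} \in \Edges(B(\Theta))$ there is by definition some $h \in \Edges(\Theta)$ containing both $u$ and $v$, and the isometry $g_h$ preserves the distance, so $\|p(u)-p(v)\| = \|q(u)-q(v)\|$. Thus all body-edge distances agree, which is exactly $\Eucl(d)$-equivalence for the graph framework.

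For the converse I would invoke the classical fact that any distance-preserving map between two finite subsets of $\RR^d$ extends to an element of $\Eucl(d)$ (pick an affine basis among the points, send it to its image, and note that the orthogonal part of an isometry is then uniquely determined on the affine span, with an arbitrary extension off it). Applied individually to $p|_h$ and $q|_h$ for each $h \in \Edges(\Theta)$, this produces the isometries $g_h$ witnessing $(p,\Theta) \sim (q,\Theta)$. This extension fact is the only substantive step and is completely standard.

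Having matched $\Eucl(d)$-equivalence and $\Eucl(d)$-congruence on the two sides, global rigidity of $(p,\Theta)$ and of $(p,B(\Theta))$ become literally the same statement, and likewise for local rigidity once one uses that the local neighborhoods in $C^d(\Verts)$ are identical. Both halves of the lemma then follow at once.
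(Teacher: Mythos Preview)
Your proof is correct and is essentially the same as the paper's, just unpacked in more detail: the paper compresses your converse step into the single observation that the complete graph on $\lvert h\rvert$ vertices is globally rigid, which is exactly your ``distance-preserving maps on finite sets extend to isometries'' fact. Everything else (matching configuration spaces, congruence, the easy direction) is implicit in the paper's one-line argument.
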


\begin{proof}
  This easily follows from the fact that, for each hyperedge $h \in
  \Edges(\Theta)$, the
  complete graph on $\abs{h}$ vertices is globally rigid.
\end{proof}

Thus we only need to consider Euclidean rigidity for frameworks of
graphs, not hypergraphs.

In the next section 
(Corollary~\ref{cor:glocal}) 
we will see that a framework is locally affinely
rigid iff it is globally affinely rigid.  Thus we can
drop the local/global distinction for affine rigidity.

In the following definition,
for $d < d'$, we view
$C^d(\Verts)$ as contained in $C^{d'}(\Verts)$ 
by the inclusion of $\RR^d$ as the
first $d$ coordinates of $\RR^{d'}$.  

\begin{definition}
\label{def:groups}
  Let $M$ be a family of monoids~$M_d$ acting on $\RR^d$, so that for
  $d < d'$, $M_d$ is the submonoid of $M_{d'}$ that fixes $\RR^d$ as a
  subset of $\RR^{d'}$.
  A framework $\rho \in C^d(\Theta)$ is \emph{universally} locally (resp.\
  globally) $M$-rigid if it is locally (resp.\ globally) $M$-rigid as
  a framework in $C^{d'}(\Theta)$ for all $d' \geq d$.
\end{definition}

Note that universal rigidity of any sort implies rigidity of the same
sort.

\begin{lemma}\label{lem:ur-local-global}
  A framework $p \in C^d(\Theta)$ is universally globally Euclideanly rigid iff
  it is universally locally Euclideanly rigid. 
\end{lemma}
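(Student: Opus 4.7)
The forward direction is trivial: by definition, universal global rigidity requires global rigidity in every $\RR^{d'}$ with $d' \geq d$, which in particular implies local rigidity in each such dimension. So I would focus on the converse.

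For the converse I plan to argue by contradiction, using a classical averaging construction. Assume $p$ is universally locally rigid but \emph{not} universally globally rigid. Then there exists a dimension $d' \geq d$ and a framework $q \in C^{d'}(\Theta)$ that is Euclideanly equivalent to $p$ (viewed in $\RR^{d'}$) but not congruent to it. The idea is to embed this data into $\RR^{2d'}$ and exhibit a continuous one-parameter family of equivalent-but-noncongruent frameworks emanating from $p$, thereby witnessing a failure of local rigidity in dimension $2d'$ and contradicting the hypothesis.

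Concretely, I would place $p$ into the first $d'$ coordinates of $\RR^{2d'}$ and $q$ into the last $d'$, and for $t \in [0,\pi/2]$ set
\[
r_t(v) = \bigl(\cos(t)\,p(v),\; \sin(t)\,q(v)\bigr) \in \RR^{2d'}.
\]
The Pythagorean identity, combined with the equality $|p(i)-p(j)|=|q(i)-q(j)|$ on pairs of vertices sharing a hyperedge (which is exactly what Euclidean equivalence over $\Theta$ provides), yields
\[
|r_t(i)-r_t(j)|^2 = \cos^2(t)\,|p(i)-p(j)|^2 + \sin^2(t)\,|q(i)-q(j)|^2 = |p(i)-p(j)|^2
\]
for every such pair. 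Hence every $r_t$ is equivalent to $r_0 = (p,0)$, which is congruent to the canonical image of $p$ in $\RR^{2d'}$, and $r_t \to r_0$ as $t \to 0^+$.

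Now I would invoke universal local rigidity in dimension $2d'$: the framework $r_0$ is locally rigid in $\RR^{2d'}$, so for sufficiently small $t>0$ the framework $r_t$ must be \emph{congruent} to $r_0$, not merely equivalent. Congruence forces every pairwise distance to agree, and the displayed formula then forces $|q(i)-q(j)|=|p(i)-p(j)|$ for \emph{all} pairs of vertices, not just those sharing a hyperedge. Thus $p$ and $q$ have identical distance matrices and are therefore already Euclideanly congruent in $\RR^{d'}$, contradicting the choice of $q$. The main conceptual step is recognizing the averaging construction, which upgrades a putative noncongruent equivalent framework in one dimension into a continuous local flex in a higher dimension; everything else reduces to carefully separating the hyperedge-distance condition (equivalence) from the all-pair-distance condition (congruence).
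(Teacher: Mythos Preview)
Your argument is correct and follows essentially the same route as the paper: the paper simply cites Bezdek and Connelly~\cite{BC02:PushingDisks} for the existence of an explicit flex in $\RR^{d+d'}$ between any two equivalent frameworks, and concludes that a non-congruent equivalent framework in some $\RR^{d'}$ yields a local flex of $p$ in that higher dimension. Your interpolation $r_t=(\cos t\,p,\sin t\,q)$ is a minor variant of the Bezdek--Connelly construction (working in $\RR^{2d'}$ rather than $\RR^{d+d'}$), and you additionally spell out the verification---left implicit in the paper's one-line citation---that for small $t$ the flex genuinely leaves the congruence class unless $p$ and $q$ already share all pairwise distances.
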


\begin{proof}
  For any two equivalent frameworks $\rho$ 
in $C^d(\Theta)$ and $\rho'$ in
  $C^{d'}(\Theta)$,  Bezdek and
  Connelly~\cite{BC02:PushingDisks} show how to construct
 an explicit flex
  between $\rho$ and $\rho'$ in $C^{d+d'}(\Theta)$.  Thus, if
  $\rho$ is a $d$-dimensional framework with a equivalent but
  non-congruent framework in $d'$ dimensions, then their constructed flex shows
  that $\rho$ is not locally rigid in $\RR^{d+d'}$.
\end{proof}

Thus we can also
drop the local/global distinction in the case of universal Euclidean rigidity.

\begin{definition}
  A framework $(p, \Gamma)$ of the graph $\Gamma$
is \emph{neighborhood} rigid (of any of
  the sorts above) if the corresponding framework
  $(p, N(\Gamma))$ of the neighborhood hypergraph is rigid (of the same
  sort).
\end{definition}

For instance, Lemmas~\ref{lem:hyper-er} and~\ref{lem:square-nbhd} tell
us that neighborhood
Euclidean rigidity of $(\rho, \Gamma)$ is equivalent to the
Euclidean rigidity
of $(\rho, \Gamma^2)$.

\begin{remark}
Related to universal Euclidean rigidity is the notion of
\emph{dimensional rigidity}~\cite{alf}. A framework (locally rigid or not)
in $\RR^d$
is called dimensionally rigid if there is no (Euclidean) equivalent framework
with an affine span of dimension strictly greater than $d$.

Another related notion is that of \emph{$d$-realizability}~\cite{belk}. 
A graph is $d$-realizable if any framework of the graph, in any dimension,
has a
(Euclidean) equivalent framework with an affine span of dimension
$d$ or less. 

Presumably one could extend these notions to arbitrary monoids as well
but we
will not pursue these in this paper. 
\end{remark}

\begin{definition}
  \label{def:generic}
  A configuration~$p$
in $C^d(\Verts)$
is \emph{generic} if the 
coordinates do
  not satisfy any non-zero algebraic equation with rational
  coefficients.  A framework is generic if its configuration is
  generic.
\end{definition}

A property is \emph{generic} in $\RR^d$
if, for every (hyper)graph, either all generic frameworks in
$C^d(\Theta)$
have the property
or none do.  For instance, local and global Euclidean rigidity in
$\RR^d$ are both generic properties of graphs and therefore for
hypergraphs as well~\cite{AR78:RigidityGraphs,GHT10}.  
For any property~$P$ (generic or not) of frameworks,
a (hyper)graph~$\Theta$ is \emph{generically~$P$} 
in $\RR^d$ if every
generic framework in $C^d(\Theta)$ has~$P$.  
(For a non-generic property like universal Euclidean rigidity, there
are (hyper)graphs that are neither
generically~$P$ or generically not~$P$.)

Thus, for any framework, we may talk about
\begin{quote}
  (generic/$\emptyset$) (universal/$\emptyset$) (local/global)
  (Euclidean/affine) rigidity.
\end{quote}
where by $\emptyset$ we mean that this term has been dropped.

%%% Local Variables: 
%%% mode: latex
%%% TeX-master: "AffineRigidity"
%%% End: 

\section{Affine Rigidity in \textalt{$\RR^d$}{R\textasciicircum d}}

We now move on the main focus of this work, affine rigidity, as
defined in the previous section.
Though the definitions start from a different point of view, this
notion of affine rigidity is, in fact,
identical to the one defined
by Zha and Zhang~\cite{ZZ09}
and the concept is also informally
mentioned by Brand~\cite{brand2004subspaces}. Additionally,
Theorem~\ref{thm:ranktest} below is
essentially equivalent to~\cite[Theorem 5.2]{ZZ09}.

Our contribution here, described by the corollaries, is 
showing how affine rigidity fits in to the general scheme of rigidity
problems.

\begin{lemma}\label{lem:complete-aff-rigid}
  Any framework of a complete $(d+2)$-hypergraph
in $\RR^d$
  is affinely rigid.
\end{lemma}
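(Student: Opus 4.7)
The plan is to exploit the basic linear-algebra fact that an affine map of $\RR^d$ is uniquely determined by its action on any $d+1$ affinely independent points, and that general position forces every $d+1$ vertex-positions in the framework to be affinely independent.

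First, I would fix two frameworks $\rho, \sigma \in C^d(\Theta)$ that are $\Aff(d)$-equivalent. For each hyperedge $h$ (a size-$(d+2)$ subset of $\Verts$), the hypothesis gives some $g_h \in \Aff(d)$ with $g_h(\rho(u)) = \sigma(u)$ for every $u \in h$. Since $h$ has $d+2$ vertices and $\rho$ is a general-position configuration, $\rho|_h$ contains $d+1$ affinely independent points; hence $g_h$ is uniquely determined by its values on $h$.

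Next, I would show that any two such $g_h$, $g_{h'}$ coincide. Given hyperedges $h \neq h'$ in $\Edges(\Theta)$, I build a short chain $h = h_0, h_1, \dots, h_k = h'$ of size-$(d+2)$ subsets where consecutive sets differ by a single vertex swap: pick $u \in h_i \setminus h'$ and $w \in h' \setminus h_i$ and set $h_{i+1} = (h_i \setminus \{u\}) \cup \{w\}$. Because $\Theta$ is the \emph{complete} $(d+2)$-hypergraph, every intermediate set $h_{i+1}$ is itself a hyperedge. Consecutive hyperedges in the chain share $d+1$ vertices; the maps $g_{h_i}$ and $g_{h_{i+1}}$ agree on the images of those vertices under $\rho$, which are $d+1$ affinely independent points by general position, so $g_{h_i} = g_{h_{i+1}}$. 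Iterating along the chain gives $g_h = g_{h'}$.

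Therefore all $g_h$ are equal to a common $g \in \Aff(d)$ with $g(\rho(u)) = \sigma(u)$ for every $u \in \Verts$, which is exactly the statement that $\rho$ and $\sigma$ are affinely congruent. The one case not covered by the chain argument is when $|\Verts| = d+2$ (there is only one hyperedge), but then the single $g_h$ already witnesses congruence. The only mild subtlety is confirming the chain stays inside $\Edges(\Theta)$, which is automatic from completeness; no other step requires more than the uniqueness of an affine map on an affinely independent spanning set.
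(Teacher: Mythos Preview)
Your argument is correct and is exactly the kind of reasoning the authors have in mind: the paper's own proof consists of the single word ``Clear.'' You have simply spelled out the details---uniqueness of an affine map on $d+1$ affinely independent points, plus a chain of overlapping hyperedges available because the $(d+2)$-hypergraph is complete---that the authors deemed routine.
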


\begin{proof}
Let $q \in C^{d}(\Verts)$ 
be a configuration with 
such that 
$(p,\Theta)$ is affinely preequivalent in $\RR^{d}$ to 
$(q,\Theta)$. 

Let $c \leq d$ be the dimension of the affine span, $S$, of the 
configuration $p$. Select $c+1$ vertices whose affine span in $p$
is $S$.
Let $A_0$ be an affine transform that maps these vertices 
from their positions in $p$ to their positions in $q$. 
The action of $A_0$
on the space  $S$ (and thus all of the vertices $p$)
is uniquely determined by these selected vertices.

For any vertex $v$, there must be a hyperedge $h_v$ that includes $v$ and the
selected vertices. 
Let $A_{h_v}$ be an affine transform that maps these vertices 
from their positions in $p$ to $q$ (which  must exist by affine
preequivalence).
The action of $A_{h_v}$
on the space  $S$ (and thus all of the vertices $p$) 
is uniquely determined by the selected vertices, and thus must agree
with that of $A_0$. Thus for all $v$, we see that their positions in $q$ are 
obtained from the positions in $p$ through $A_0$
Thus $p$ is affinely precongruent to $q$.
\end{proof}

\begin{proposition}
\label{prop:khype}
  A framework $(p,\Theta)$ in general position is affinely locally
  (resp.\ globally)
  rigid iff the associated framework
  $(p, B_{d+2}(\Theta))$ is affinely locally (resp.\ globally) rigid.
\end{proposition}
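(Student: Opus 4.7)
The plan is to show that, for $p$ in general position, a framework $\sigma \in C^d(\Theta)$ is affinely equivalent to $(p, \Theta)$ if and only if it is affinely equivalent to $(p, B_{d+2}(\Theta))$. Since affine congruence depends only on the underlying configurations and not on the hyperedge structure, this single equivalence yields both the local and the global versions of the statement (the local version follows by restricting the equivalence to a neighborhood of $p$ in $C^d(\Theta)$).

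First I would dispose of the small hyperedges. If $h \in \Edges(\Theta)$ has $\abs{h} \leq d+1$, then by the general-position hypothesis the points $\{p(u)\}_{u \in h}$ are affinely independent in $\RR^d$, so they extend to an affine basis; consequently, for any target values $\{\sigma(u)\}_{u \in h}$ there exists an affine map $g_h$ with $g_h(p(u)) = \sigma(u)$ for all $u \in h$. Such hyperedges therefore impose no constraint on affine equivalence, and they also contribute no hyperedges to $B_{d+2}(\Theta)$, so they can be ignored on both sides.

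Next, for a hyperedge $h$ with $\abs{h} \geq d+2$, I would prove the key local claim: an affine map $g_h$ satisfying $g_h(p(u)) = \sigma(u)$ for every $u \in h$ exists if and only if, for every size-$(d+2)$ subset $S \subset h$, there is an affine map $g_S$ satisfying $g_S(p(u)) = \sigma(u)$ for every $u \in S$. The forward direction is immediate by restriction. For the converse, choose $d+1$ vertices $T \subset h$ whose $p$-images form an affine basis (available by general position), and let $g_h$ be the unique affine map realizing $g_h(p(u)) = \sigma(u)$ for $u \in T$. For any other $u \in h$, apply the hypothesis to $S = T \cup \{u\}$: the resulting $g_S$ agrees with $g_h$ on the affine basis $T$, hence $g_S = g_h$, and in particular $g_h(p(u)) = g_S(p(u)) = \sigma(u)$.

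Combining these two observations, the set of frameworks $\sigma$ affinely equivalent to $(p, \Theta)$ coincides with the set of $\sigma$ affinely equivalent to $(p, B_{d+2}(\Theta))$, as conditions on $\sigma \in C^d(\Verts)$. Local and global affine rigidity therefore transfer between the two frameworks. I do not anticipate a serious obstacle; the one point worth double-checking is that the general-position hypothesis simultaneously supplies an affine basis inside each sufficiently large hyperedge and the uniqueness of affine extensions from $d+1$ affinely independent points, which are exactly what make the hyperedge-by-hyperedge reduction go through.
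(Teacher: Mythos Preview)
Your proposal is correct and follows essentially the same approach as the paper: discard hyperedges with at most $d+1$ vertices as imposing no affine constraint, and then show that a hyperedge with $d+2$ or more vertices can be replaced by all of its $(d+2)$-subsets without changing the affine-equivalence class. The only difference is cosmetic: where the paper invokes Lemma~\ref{lem:complete-aff-rigid} (affine rigidity of the complete $(d+2)$-hypergraph in general position) as a black box, you unpack that lemma inline via the affine-basis argument on $T \subset h$, which makes your write-up slightly more self-contained but otherwise identical in content.
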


(Compare Lemma~\ref{lem:hyper-er}.)

\begin{proof}
First
consider a hyperedge with less than $d+2$ vertices in general position.
Using  
a $d$-dimensional affine transform, we can move these vertices to
any other configuration 
in $\RR^d$.
Therefore this hyperedge does not
affect affine preequivalence and may
be dropped without affecting
affine rigidity.
Next consider a hyperedge with $k$ vertices, with $k > d+2$.
By Lemma~\ref{lem:complete-aff-rigid},
one can replace this hyperedge with
$\binom{k}{d+2}$ hyperedges corresponding to all subsets of $d+2$
vertices. The framework of the new hypergraph
will be affinely rigid in $\RR^d$ iff the original one is.
\end{proof}

\begin{definition}\label{def:affinity-matrix}
An \emph{affinity matrix} of a framework $(p,\Theta)$ 
in $C^d(\Theta)$
is a matrix with $v$ columns such that 
each row encodes some affine 
relation between the coordinates of 
the vertices in a hyperedge of 
$(p,\Theta)$ as a homogeneous linear equation in the following sense.
The only non-zero entries in a row correspond to vertices in
some hyperedge, the sum of the entries in a row is $0$, and 
each of the coordinates of~$p$, thought of as a vector of
length~$v$,
is in the kernel of the matrix.

An affinity matrix is \emph{strong} if it encodes all of the 
affinely independent relations in every hyperedge of
$(p,\Theta)$. 
\end{definition}

\begin{lemma}
\label{lem:affinity-equiv}
If the framework 
$(p, \Theta)$ 
is affinely preequivalent to 
$(q,\Theta)$ 
then
the
coordinates of $q$ are in the kernel of any affinity matrix for
$(p, \Theta)$.
Additionally, the converse is true  if the affinity matrix is strong.
\end{lemma}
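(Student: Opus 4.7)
The plan is to treat the two directions separately: the forward direction is a direct computation using the definition of affine map, while the converse requires reconstructing an affine map $g_h$ for each hyperedge $h$ out of the kernel condition, using strongness crucially.

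For the forward direction, I would take an arbitrary row of an affinity matrix, viewed as a vector $(a_v)_{v \in \Verts}$. By Definition~\ref{def:affinity-matrix} it is supported in some hyperedge $h$, satisfies $\sum_v a_v = 0$, and annihilates each coordinate of $p$, i.e.\ $\sum_v a_v p(v) = 0$ in $\RR^d$. Affine equivalence of $(q,\Theta)$ to $(p,\Theta)$ supplies an affine map $g_h(x) = Lx + b$ with $q(v) = g_h(p(v))$ for all $v \in h$. Then
\[
  \sum_v a_v q(v) \;=\; L\Bigl(\sum_v a_v p(v)\Bigr) + \Bigl(\sum_v a_v\Bigr) b \;=\; 0,
\]
and reading this equation one coordinate at a time shows that every coordinate vector of $q$ lies in the kernel of the affinity matrix.

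For the converse, assume the affinity matrix $M$ is strong and every coordinate of $q$ lies in $\ker M$. Fix a hyperedge $h$; I need to produce an affine $g_h$ with $g_h(p(v)) = q(v)$ for all $v \in h$. Choose $\{v_{i_1},\dots,v_{i_m}\}\subseteq h$ so that $\{p(v_{i_j})\}$ is an affine basis of $\afsp\{p(v):v\in h\}$. There is a unique affine map from this affine hull to $\RR^d$ sending $p(v_{i_j})$ to $q(v_{i_j})$; extend it arbitrarily to an affine map $g_h$ on all of $\RR^d$. For any other $v\in h$, write $p(v) = \sum_j c_j p(v_{i_j})$ with $\sum_j c_j = 1$. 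This produces an affine relation on $h$, namely the vector with coefficient $1$ at $v$ and $-c_j$ at $v_{i_j}$, which annihilates $p$. Since $M$ is strong, this vector lies in the row span of $M$, so $q$ satisfies the same relation: $q(v) = \sum_j c_j q(v_{i_j}) = \sum_j c_j g_h(p(v_{i_j})) = g_h(p(v))$, as needed.

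The main obstacle is concentrated in the converse step, specifically in unpacking what \emph{strong} buys us: the rows of $M$ need not literally include every affine dependency on $h$, only enough of them that every affine dependency lies in their span. I expect the care in a full write-up to go into verifying that the row space of $M$, restricted to the columns indexed by $h$, equals the entire space of zero-sum coefficient vectors annihilating $p|_h$, so that the ad hoc relation produced above from $\{v_{i_j}\}$ is actually inherited by $q$. The rest of the argument is essentially linear-algebraic bookkeeping.
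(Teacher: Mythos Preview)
Your proof is correct; the paper's own proof is simply ``Clear from the definitions,'' and what you have written is exactly the natural unpacking of that claim. Both directions follow the only reasonable route---affine maps preserve affine relations for the forward direction, and strongness guarantees that the ad hoc relation you build on each hyperedge lies in the row span of $M$ for the converse---so there is nothing to contrast.
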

\begin{proof}
  Clear from the definitions.
\end{proof}

The kernel of an affinity matrix of a framework
$(p,\Theta)\in C^d(\Theta)$ 
always contains the subspace of $\RR^v$ 
spanned by the coordinates of~$p$ along each axis and the
vector~$\vec 1$ of all~$1$'s. This corresponds to the fact that any
$p$ is preequivalent to any of its affine images.
If $p$ is a proper $d$-dimensional configuration
(with full $d$-dimensional affine span),
these vectors are independent and span a
$(d+1)$-dimensional space.  In particular,
a generic framework of a hypergraph
with at least $d+1$ vertices in $\RR^d$
is proper, so for such frameworks the corank of any of its affinity 
matrices must be no less than $d+1$.

The rank of strong affinity matrices fully characterize affine rigidity.

\begin{theorem}
\label{thm:ranktest} 
Let $\Theta$ be a hypergraph with at least $d+1$ vertices.
Let $(p,\Theta)$
be any proper, $d$-dimensional framework
and let $M$ be any strong affinity matrix for $(p,\Theta)$.
Then 
$(p,\Theta)$
is affinely rigid in $\RR^d$ iff 
$\dim(\ker(M))= d+1$.
\end{theorem}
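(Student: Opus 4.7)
The plan is to identify two natural subspaces of $\RR^v$ and show that affine rigidity of $(p,\Theta)$ corresponds exactly to their equality. Let $V \subseteq \RR^v$ be the subspace spanned by the $d$ coordinate functions of $p$, each viewed as a length-$v$ vector, together with the all-ones vector $\vec 1$; properness of $p$ makes these $d+1$ vectors linearly independent, so $\dim V = d+1$, and both the row-sum condition and the affine relations encoded by $M$ give $V \subseteq \ker(M)$. The crux of the argument is that these two subspaces encode the two orbits under consideration: a framework $(q,\Theta)$ is affinely congruent to $(p,\Theta)$ iff each coordinate function of $q$ lies in $V$ (since applying an affine map $x\mapsto Lx+b$ to $p$ produces coordinate functions of the form $\sum_j L_{ij} p_j + b_i \vec 1$), while by Lemma~\ref{lem:affinity-equiv} together with the strongness of $M$, it is affinely equivalent to $(p,\Theta)$ iff each coordinate function of $q$ lies in $\ker(M)$.

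With these identifications, both directions of the theorem are short. For the ``if'' direction, $\dim \ker(M) = d+1$ forces $V = \ker(M)$, so equivalence and congruence coincide and $(p,\Theta)$ is globally affinely rigid. For the ``only if'' direction, suppose $\dim \ker(M) > d+1$ and choose some $w \in \ker(M) \setminus V$. I would form the configuration $q_\eps$ by replacing the first coordinate function of $p$ with $p_1 + \eps w$ while keeping all the other coordinate functions unchanged. Each coordinate function of $q_\eps$ lies in $\ker(M)$, so by Lemma~\ref{lem:affinity-equiv} the framework $q_\eps$ is affinely equivalent to $p$; but $p_1 + \eps w \notin V$ for $\eps\neq 0$, so $q_\eps$ is not congruent to $p$, and since $q_\eps \to p$ as $\eps \to 0$, even local affine rigidity fails.

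The step I expect to cost the most care is the identification of the affine-congruence orbit with the condition that all coordinate functions lie in $V$. The forward direction is a direct calculation from the formula $x \mapsto Lx+b$, but the reverse requires producing an \emph{invertible} affine map in $\Aff(d)$, whereas a priori the tuple of coordinates lying in $V$ only produces some affine $A$ with $q = A\circ p$. I would use properness of $p$ together with the invertible per-hyperedge maps $g_h$ guaranteed by affine equivalence to rule out rank-deficiency of $A$: such an $A$ would collapse $q$ into a proper affine subspace of $\RR^d$, which is incompatible with the existence of invertible $g_h$ on any hyperedge whose image under $p$ is sufficiently non-degenerate, and hence automatic in the settings of interest.
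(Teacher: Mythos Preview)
Your argument is essentially the paper's own: both identify the $(d{+}1)$-dimensional span $V$ of the coordinate vectors of $p$ together with $\vec 1$, invoke Lemma~\ref{lem:affinity-equiv} (the forward direction for ``if'', the strongness-dependent converse for ``only if'') to equate affine equivalence with having all coordinate functions in $\ker M$, and for the failure direction perturb one coordinate of $p$ by an extra kernel vector. The invertibility subtlety you raise in your final paragraph is not addressed in the paper's proof at all---it simply asserts that coordinates lying in $V$ force affine congruence---so that paragraph is additional care on your part rather than a different strategy.
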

\begin{proof}
By Lemma~\ref{lem:affinity-equiv}, for any other 
configuration $q$ in $C^d(\Verts)$ such that 
$(p,\Theta)$ is  affinely preequivalent to 
$(q,\Theta)$, 
the coordinates of
$q$ 
must be in the kernel of $M$. 
When
$\dim(\ker(M))=d+1$, the kernel of $M$ 
contains only one-dimensional
projections of $p$ and the all-ones vector.
Thus when
$(p,\Theta)$ 
 is  affinely preequivalent to 
$(q,\Theta)$, we have that 
$(p,\Theta)$ 
must in fact be affinely   precongruent  to 
$(q,\Theta)$.

Conversely, if the corank is higher, then the kernel must 
contain an ``extra'' vector that is not a one-dimensional
projection of $p$. Adding any amount of 
this vector to one of the coordinates
of $p$ must, by 
Lemma~\ref{lem:affinity-equiv}, 
produce a $q$ such that 
$(p,\Theta)$ is  affinely preequivalent to 
$(q,\Theta)$ 
but not  precongruent  to it.
\end{proof}

It is easy now to prove the following corollaries.

\begin{corollary}
\label{cor:glocal}
If $(p,\Theta)$ 
is affinely globally flexible in $\RR^d$ 
then it is affinely locally flexible.
\end{corollary}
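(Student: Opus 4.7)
The plan is to simply repackage the converse direction already exhibited in the proof of Theorem~\ref{thm:ranktest}, noting that the flex produced there is continuous in a parameter.

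Assuming first that $(p,\Theta)$ is a proper $d$-dimensional framework, I would pick any strong affinity matrix $M$ for $(p,\Theta)$. Since $(p,\Theta)$ is globally affinely flexible, Theorem~\ref{thm:ranktest} tells us $\dim(\ker M)>d+1$, so there is an ``extra'' vector $w\in\ker M$ that is not in the span of $\vec 1$ and the coordinate projections $p^1,\dots,p^d$. For each $\epsilon\in\RR$, define $q_\epsilon\in C^d(\Verts)$ by $q_\epsilon^1=p^1+\epsilon w$ and $q_\epsilon^i=p^i$ for $i>1$. Each coordinate of $q_\epsilon$ lies in $\ker M$, so by Lemma~\ref{lem:affinity-equiv} the framework $(q_\epsilon,\Theta)$ is affinely equivalent to $(p,\Theta)$. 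For $\epsilon\neq 0$ it is not affinely congruent, since congruence would force $q_\epsilon^1=p^1+\epsilon w$ to lie in $\spanop(p^1,\dots,p^d,\vec 1)$, contradicting the choice of $w$. Since $q_\epsilon\to p$ as $\epsilon\to 0$, every neighborhood of $p$ in $C^d(\Theta)$ contains affinely equivalent but non-congruent frameworks, so $(p,\Theta)$ is affinely locally flexible.

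The only real subtlety is the case where $(p,\Theta)$ is not proper, since Theorem~\ref{thm:ranktest} is stated for proper configurations. I would handle this by viewing $p$ inside its affine span $\RR^{d'}\subset\RR^d$ with $d'<d$: global affine flexibility in $\RR^d$ restricts to global affine flexibility in $\RR^{d'}$ (any witness $q$ can be post-composed with an affine map collapsing to the correct span), and the construction above then yields a local flex in $\RR^{d'}\subset\RR^d$. I expect this dimension-reduction step to be the main — though still minor — obstacle; the rest of the argument is essentially already inside the proof of Theorem~\ref{thm:ranktest}.
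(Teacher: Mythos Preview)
Your first paragraph is exactly the paper's proof: pull an extra kernel vector out of a strong affinity matrix and add arbitrarily small multiples of it to one coordinate of~$p$. The paper does not treat the non-proper case separately, and your proposed dimension reduction for it is not quite right (a witness $q$ need not have $d'$-dimensional affine span, and a ``collapsing'' post-composition is not an element of $\Aff(d)$); in fact no reduction is needed, since even for improper~$p$ one checks that an affinely equivalent $q$ whose coordinates all lie in $\spanop(p^1,\dots,p^d,\vec 1)$ is already congruent to~$p$, so the extra-vector argument of your first paragraph applies verbatim.
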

\begin{proof}
From the proof of Theorem~\ref{thm:ranktest}, when
$(p,\Theta)$ 
is affinely globally flexible in $\RR^d$ there is an extra vector
$\delta$ in the kernel of a strong affinity matrix, and we can add any
multiple of $\delta$ to one of the coordinates of $p$ to get a framework
to which $(p,\Theta)$ is 
affinely preequivalent but not precongruent.
\end{proof}

\begin{remark}
In fact, 
if $(p,\Theta)$ 
is affinely preequivalent 
to $(q,\Theta)$,
there is a continuous path
of frameworks in $C^d(\Theta)$ to which
$(p,\Theta)$ 
is affinely preequivalent,
namely $((1-t)p + tq, \Theta)$.
\end{remark}

\begin{corollary}
\label{cor:uar}
A framework $(p,\Theta) \in C^d(\Theta)$
is affinely rigid in $\RR^d$ iff it
is affinely rigid when considered as a (degenerate) framework
in $\RR^{d'}$ for $d' \geq d$.
\end{corollary}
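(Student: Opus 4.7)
The plan is to reduce both directions of the equivalence to the same rank condition on a strong affinity matrix for $(p,\Theta)$. By Theorem~\ref{thm:ranktest}, affine rigidity of $(p,\Theta)$ in $\RR^d$ is equivalent to $\dim \ker M = d+1$ for a strong affinity matrix $M$, so it suffices to establish exactly the same characterization of affine rigidity of $(p,\Theta)$ in $\RR^{d'}$.

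First I would observe that a strong affinity matrix $M$ encodes the affine dependencies among the points in each hyperedge, and these depend only on the positions $p(v)$, not on the ambient space. Viewing $(p,\Theta)$ via the inclusion $\RR^d \subset \RR^{d'}$, the same matrix $M$ therefore serves as a strong affinity matrix in $C^{d'}(\Theta)$. Moreover, the ``trivial'' subspace $T \subset \RR^v$ spanned by $\vec 1$ together with the coordinate vectors of $p$ still has dimension $d+1$ in the larger ambient space, because the additional $d'-d$ coordinate vectors contributed by the inclusion are identically zero; and in both settings $T \subseteq \ker M$.

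Next I would rerun the proof of Theorem~\ref{thm:ranktest} with $\RR^{d'}$ as the ambient space. By Lemma~\ref{lem:affinity-equiv}, $(q,\Theta) \in C^{d'}(\Theta)$ is affinely equivalent to $(p,\Theta)$ iff the coordinate vectors of $q$ lie in $\ker M$. On the other hand, $(q,\Theta)$ is affinely congruent to $(p,\Theta)$ iff those coordinate vectors lie in $T$: applying an affine self-map of $\RR^{d'}$ to $p$ produces a configuration whose coordinate vectors are linear combinations of $\vec 1$ and the $d$ nonzero coordinate vectors of $p$, and conversely every such combination is realized by extending the prescribed first $d$ columns of the ambient matrix arbitrarily on the last $d'-d$ axes. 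Hence rigidity in $\RR^{d'}$ is equivalent to $\ker M = T$, i.e.\ $\dim \ker M = d+1$, precisely the condition characterizing rigidity in $\RR^d$.

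The main point I expect to have to address carefully is the congruence characterization in the degenerate embedding: an invertible ambient affine map is strictly required, and one must verify that the free last $d'-d$ columns can be chosen to restore invertibility when $q$ has affine hull of dimension $d$. The remaining non-proper $q$'s, where invertibility fails, are handled either by the convention permitting degenerate affine maps on the image of $p$ or, alternatively, by the continuous interpolation $(1-t)p + tq$ appearing in the Remark after Corollary~\ref{cor:glocal}.
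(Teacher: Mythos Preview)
Your proposal is correct and follows essentially the same approach as the paper, whose entire proof is the one line ``Follows from the proof of Theorem~\ref{thm:ranktest}.'' You have simply unpacked what that sentence means: the same matrix $M$ is a strong affinity matrix in either ambient dimension, the trivial subspace $T$ has dimension $d+1$ in both, and the argument of Theorem~\ref{thm:ranktest} then gives the same kernel criterion for rigidity in $\RR^{d'}$ as in $\RR^d$. Your careful treatment of the invertibility of the ambient affine map in the degenerate case actually goes beyond anything the paper spells out (the same issue is already implicit in the paper's proof of Theorem~\ref{thm:ranktest} itself), so on that point you are being more, not less, thorough than the source.
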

\begin{proof}
Follows from the proof of Theorem~\ref{thm:ranktest}.
\end{proof}

Thus there is no distinct notion of 
``universal''  affine rigidity.

\begin{corollary}
\label{cor:arGen}
Affine rigidity in $\RR^d$
is a generic property of a hypergraph.
\end{corollary}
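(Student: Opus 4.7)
The plan is to apply Theorem~\ref{thm:ranktest}: since affine rigidity in $\RR^d$ is controlled by the rank of a strong affinity matrix, it suffices to exhibit one such matrix $M(p)$ whose entries are polynomials in the coordinates of $p$ with rational coefficients, and then appeal to semicontinuity of rank.

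For the construction, for each hyperedge $h$ with $\abs{h}\geq d+2$ and each $(d+2)$-subset $S\subseteq h$, there is a unique (up to scalar) affine dependence supported on $S$; its coefficients are the signed $(d+1)\times(d+1)$ minors of the augmented position matrix whose columns are $(p(v);1)$ for $v\in S$. Assembling all such rows over all hyperedges of $\Theta$ produces a matrix $M(p)$ with polynomial entries over $\QQ$ that is manifestly an affinity matrix. Next I would verify strongness at every configuration in sufficiently general position: inside a hyperedge of size $k$, the space of affine dependences has dimension $\max(0,k-d-1)$, and I claim the $\binom{k}{d+2}$ elementary relations above span it. This is the matroid-theoretic fact that circuits span the cycle space of the uniform matroid $U_{d+1,k}$, provable by induction on the support size of a given relation: at each step, subtracting an appropriate scalar multiple of an elementary circuit kills one coordinate of the support.

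Since $M(p)$ has polynomial entries over $\QQ$, its rank is lower semicontinuous in the Zariski topology over $\QQ$, attaining its maximum value $r^\star$ on a nonempty Zariski-open subset $U\subseteq C^d(\Verts)$ cut out by the nonvanishing of some $r^\star\times r^\star$ minor defined over $\QQ$. By Definition~\ref{def:generic}, every generic configuration lies in $U$ and hence satisfies $\rank M(p)=r^\star$. For such $p$, $M(p)$ is strong of corank $v-r^\star$, so Theorem~\ref{thm:ranktest} makes affine rigidity equivalent to the purely combinatorial condition $v-r^\star=d+1$. Consequently either every generic framework in $C^d(\Theta)$ is affinely rigid or none is.

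The main obstacle is the strongness verification---checking that elementary $(d+2)$-subset relations span all affine dependences within each hyperedge; once this is in hand, the remainder is routine semicontinuity of matrix rank.
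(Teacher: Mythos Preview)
Your argument is correct and takes a genuinely different route from the paper's.  The paper does not construct any particular affinity matrix; it merely observes that the condition ``$M$ is an affinity matrix for $(p,\Theta)$'' is a linear constraint $C(p)m=0$ on the entries of~$M$ with coefficients polynomial in~$p$, and then invokes the general Proposition~\ref{prop:genrank} (whose proof handles the polynomial dependence via an adjugate trick) to conclude that the maximal rank achievable by any affinity matrix is the same for every generic~$p$.  Since a strong affinity matrix realizes this maximum, Theorem~\ref{thm:ranktest} finishes.  You instead build an explicit polynomial matrix $M(p)$ from Cramer minors on $(d+2)$-subsets, check that it is strong at generic~$p$ via the circuit-spanning argument, and then apply semicontinuity of rank directly (essentially Lemma~\ref{lem:poly-mat}, without the rest of Proposition~\ref{prop:genrank}).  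Your approach is more concrete and self-contained, and the explicit $M(p)$ could be algorithmically useful; the price is the strongness verification, which the paper's abstract argument avoids entirely.  The paper's route is shorter once one grants Proposition~\ref{prop:genrank}, which is needed anyway for Proposition~\ref{prop:nss}.
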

\begin{proof}
The condition that $M$ is an affinity matrix for $(p,\Theta)$
is linear in the entries in~$M$.
The corollary then follows from 
Proposition~\ref{prop:genrank}.
\end{proof}

Though we will not pursue the details here, one can use the 
concept of an affinity matrix to derive
an efficient randomized discrete algorithm for testing generic
affine rigidity of a hypergraph in $\RR^d$. 
To do this, one needs to use integers
of sufficiently many bits, and do the arithmetic modulo a suitably
large prime. The details parallel 
those in the global rigidity case~\cite[Section~5]{GHT10}.

\begin{remark}
There is also an strong  connection between affine rigidity
and a problem from polyhedral scene analysis~\cite{whiteley1989matroid}.
This is most easily explained in two dimensions.
Given a framework in $\RR^2$, one can interpret each hyperedge as a  
 planar polygon drawn in $\RR^2$  (the vertex order is
not relevant here). We say that the framework is \emph{sharp}
if each vertex can be given a
third coordinate, such that, in the resulting three dimensional drawing,
each polygon remains planar, and the faces do not all lie in a single
plane. This idea is easily generalized to arbitrary dimension.

As shown in~\cite[Proposition 2.1]{whiteley1989matroid}, 
a framework is sharp iff
the rank of its strong affinity matrix is not maximal. Thus this 
notion of sharpness corresponds
exactly to affine flexibility.

More deeply, due the combinatorial characterization
of sharpness given by~\cite[Theorem 4.2]{whiteley1989matroid},  generic
affine rigidity can be tested by an efficient  
deterministic algorithm.
\end{remark}

%%% Local Variables: 
%%% mode: latex
%%% TeX-master: "AffineRigidity"
%%% End: 

\section{Universal Euclidean Rigidity}

We now turn to universal Euclidean rigidity.
To begin, we need the following technical definition:

\begin{definition}
  \label{def:conic}
  We say that the edge directions of a graph framework
  $(p,\Gamma) \in C^d(\Gamma)$ 
are \emph{on a conic at
  infinity} if there exists a symmetric $d$-by-$d$ matrix~$Q$ such that
for all edges $(u,v)$ of 
$\Gamma$, we have
\[ [p(u)-p(v)]^t Q [p(u)-p(v)] = 0.\]
\end{definition}

The 
edge directions  of $(p,\Gamma)$
are 
on a conic at infinity iff  there is
a continuous family of $d$-dimensional non-Euclidean
affine transforms that
preserve all of the edge lengths~\cite{Connelly05:GenericGlobalRigidity}.
This is a very degenerate situation which is 
very easy to rule out. For example, if in a hypergraph framework $(p,
\Theta)$
some  hyperedge in $\Theta$ has vertices whose positions in~$p$
affinely span $\RR^d$, then 
the 
edge directions  $(p,B(\Theta))$ cannot be 
on a conic at infinity.

\begin{theorem}
\label{thm:a2u}
If a framework $(p,\Theta)$ of a hypergraph $\Theta$
with $p \in C^d(\Verts)$ is
affinely rigid in $\RR^d$, and  
the 
edge directions of $(p,B(\Theta))$ are
not on a conic at infinity, 
then $(p,\Theta)$ 
is universally Euclidean rigid.
\end{theorem}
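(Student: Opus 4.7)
The plan is to reduce Euclidean universal rigidity to the affine rigidity hypothesis and then use the conic-at-infinity condition to promote the resulting affine congruence to a Euclidean one.

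First, I would take any $d' \geq d$ and any framework $(q,\Theta) \in C^{d'}(\Theta)$ that is Euclidean equivalent to $(p, \Theta)$, aiming to show that $(p,\Theta)$ and $(q,\Theta)$ are Euclidean congruent in $\RR^{d'}$. Every Euclidean isometry is an affine map, so the hyperedge-wise isometries witnessing Euclidean equivalence also witness affine equivalence of $(q,\Theta)$ and $(p,\Theta)$ in $\RR^{d'}$. Corollary~\ref{cor:uar} lifts the affine rigidity hypothesis from $\RR^d$ to $\RR^{d'}$, so there is a single affine map $A(x) = Lx + t$ on $\RR^{d'}$ with $q(v) = A(p(v))$ for every vertex $v$. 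What remains is to argue that $L$ acts as an isometry on the affine span of $p$; once we know this, $L$ restricted to that span extends to an orthogonal transformation of $\RR^{d'}$ and $A$ becomes a genuine Euclidean isometry of $\RR^{d'}$ carrying $p$ to $q$.

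Here the conic-at-infinity hypothesis enters. For every edge $(u, v) \in \Edges(B(\Theta))$, Euclidean equivalence forces $\|q(u) - q(v)\| = \|p(u) - p(v)\|$, and combining this with $q(u) - q(v) = L(p(u) - p(v))$ yields
\[
(p(u) - p(v))^t \bigl(L^t L - I\bigr) (p(u) - p(v)) = 0.
\]
Choosing an orthonormal basis for the $d$-dimensional affine span of $p$ inside $\RR^{d'}$, the symmetric operator $L^t L - I$ restricts to a symmetric $d \times d$ matrix $Q$ satisfying exactly the defining equation of Definition~\ref{def:conic} on every edge of $B(\Theta)$. By hypothesis, no such nonzero $Q$ exists, so $Q = 0$, i.e., $L$ is orthogonal on the affine span of $p$, which finishes the argument.

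The main obstacle I anticipate is the dimensional bookkeeping in the last step: $L$ is nominally a linear operator on $\RR^{d'}$, but the conic-at-infinity condition is phrased on $\RR^d$, so one must choose coordinates on the affine span of $p$ compatibly and verify that the resulting $d \times d$ symmetric matrix $L^t L - I$ is the right object to feed into Definition~\ref{def:conic}. Once that identification is in place, the proof is an essentially mechanical unpacking, with the affine rigidity hypothesis providing the single map $A$ and the no-conic-at-infinity hypothesis doing precisely the work of forcing $L$ to be an orthogonal embedding on the support of the configuration.
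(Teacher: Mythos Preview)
Your proposal is correct and follows essentially the same route as the paper: Euclidean equivalence implies affine equivalence, Corollary~\ref{cor:uar} upgrades affine rigidity to $\RR^{d'}$ to yield a single global affine map, and the no-conic-at-infinity hypothesis forces that map to be Euclidean on the span of~$p$. The only cosmetic difference is that the paper first rotates $q$ down into $\RR^d$ and then invokes the conic condition there, whereas you stay in $\RR^{d'}$ and restrict the symmetric form $L^tL - I$ to the $d$-dimensional span of~$p$; these are equivalent ways of handling the dimensional bookkeeping you flagged.
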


\begin{proof}
Let $q \in C^{d'}(\Verts)$ 
be a configuration with $d' > d$
such that 
$(p,\Theta)$ is Euclidean equivalent in $\RR^{d'}$ to 
$(q,\Theta)$. 
Then $(p,\Theta)$ 
is affinely preequivalent in $\RR^{d'}$ to 
$(q,\Theta)$. 
Since $(p,\Theta)$ is affinely rigid in $\RR^d$, 
from Corollary~\ref{cor:uar}, we have that 
$p$ is affinely precongruent to 
$q$ in $\RR^{d'}$
and 
the affine span of 
$q$ must be of dimension no larger than $d$.
Let $R(q)$ be a rotation of $q$ down to~$\RR^d$.
Then
$p$,  
is affine  precongruent in $\RR^d$ to 
$R(q)$
and 
$(p,\Theta)$
is  Euclidean  equivalent  in $\RR^d$ to 
$(R(q),\Theta)$.

Let $A$ be an affine transform such that
$A(p)=R(q)$ (which must exist due to affine precongruence).
By Euclidean equivalence, all of edge lengths
agree between
$(p,B(\Theta))$ and
$(A(p),B(\Theta))$.
If $A$ is not Euclidean, then this means that the 
edge directions of $(p,B(\Theta))$ are
on a conic at infinity, which contradicts our assumption.
Thus $A$ is Euclidean making 
$p$ and 
$R(q)$ 
Euclidean congruent in $\RR^d$.
Likewise $p$
must be Euclidean congruent to $q$ in $\RR^{d'}$, and we can conclude
that 
$(p,\Theta)$ 
is universally rigid.
\end{proof}

\begin{corollary}
\label{cor:ur}
Let $\Theta$ be a hypergraph with at least $d+2$ vertices.
If a generic framework $(p,\Theta)$ of a hypergraph $\Theta$
with $p \in C^d(\Verts)$ is
affinely rigid in $\RR^d$
then $(p,\Theta)$ 
is universally rigid.
\end{corollary}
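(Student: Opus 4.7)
The plan is to reduce immediately to Theorem~\ref{thm:a2u}, which supplies universal Euclidean rigidity as soon as one knows, in addition to affine rigidity, that the edge directions of $(p, B(\Theta))$ do not lie on a conic at infinity. Affine rigidity is already in the hypothesis, so only the conic-at-infinity clause needs to be verified.

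First I would show that some hyperedge of $\Theta$ must contain at least $d+2$ vertices. Suppose, for contradiction, that every hyperedge has at most $d+1$ vertices. Then, since $p$ is generic, the vertices of each hyperedge are in affine general position in $\RR^d$ and so admit no non-trivial affine relation. Consequently every row of any strong affinity matrix $M$ for $(p,\Theta)$ must be the zero row, giving $\dim(\ker M) = v \geq d+2 > d+1$. By Theorem~\ref{thm:ranktest}, $(p,\Theta)$ is then affinely flexible, contradicting the hypothesis.

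Fix such a hyperedge $h$ with $\abs{h} \geq d+2$. By genericity of $p$, the positions $\{p(u) : u \in h\}$ affinely span $\RR^d$. The observation preceding Theorem~\ref{thm:a2u} already rules out a conic at infinity in this situation, but for completeness I would verify it directly. Suppose $Q$ is a symmetric $d$-by-$d$ matrix with $(p(u)-p(v))^t Q (p(u)-p(v)) = 0$ for every edge $(u,v)$ of $B(\Theta)$. Pick $v_0, v_1, \ldots, v_d \in h$ whose positions affinely span $\RR^d$ and set $w_i := p(v_i)-p(v_0)$ for $i = 1, \ldots, d$. Every pair $(v_i, v_j)$ is an edge of $B(\Theta)$, so $w_i^t Q w_i = 0$; expanding $(w_i - w_j)^t Q (w_i - w_j) = 0$ and using symmetry of $Q$ then gives $w_i^t Q w_j = 0$ for all $i \neq j$. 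Since $\{w_1, \ldots, w_d\}$ is a basis of $\RR^d$, this forces $Q = 0$.

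The only substantive step is the ``large hyperedge exists'' argument; everything else is bookkeeping through the definitions. With both hypotheses of Theorem~\ref{thm:a2u} in place, the conclusion that $(p,\Theta)$ is universally Euclidean rigid follows immediately. The potential obstacle I can foresee is purely notational: one wants to be sure that ``affinely rigid in $\RR^d$'' really does force a hyperedge large enough to pin down the span, and the clean route through Theorem~\ref{thm:ranktest} above seems to settle that cleanly.
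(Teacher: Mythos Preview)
Your proof is correct and follows essentially the same route as the paper's: both reduce to Theorem~\ref{thm:a2u} by arguing that affine rigidity of a generic framework forces some hyperedge of size at least $d+2$, and then using that hyperedge (whose vertices affinely span $\RR^d$) to rule out a conic at infinity. The paper asserts the large-hyperedge step without detail and handles the conic step by noting that $B(\Theta)$ contains a generic $(d+1)$-simplex, whereas you spell both out explicitly via Theorem~\ref{thm:ranktest} and a direct basis computation showing $Q=0$.
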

\begin{proof}
As in the proof of Proposition~\ref{prop:khype}, 
any generic framework of a hypergraph $\Theta$ with at least $d+2$
vertices
that is affinely rigid in 
$\RR^d$, must have at least one hyperedge $h$ with at least $d+2$ vertices,
and these
vertices in~$p$ have a $d$-dimensional  affine span. Thus 
$(p,B(\Theta))$ must include
a general position  
framework of a $(d+1)$-simplex and thus cannot have edge directions
at a conic at infinity.  Then Theorem~\ref{thm:a2u} applies.
\end{proof}

There can be frameworks that are universally rigid but not
affinely rigid in $\RR^d$. (See Figure~\ref{fig:tetra}.)

\begin{figure}[t]
\center
\includegraphics{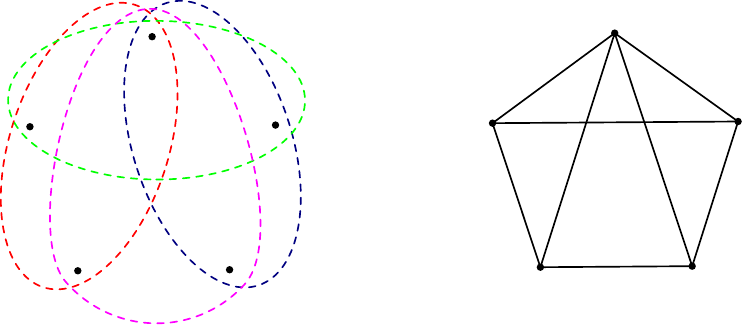}
\caption{\label{fig:tetra} The framework in $\RR^2$ of the hypergraph
  on the left
 is not affinely rigid as each hyperedge (shown as a dashed ellipse)
has only 3 vertices.  But
 this framework 
 is universally Euclidean rigid, as its body graph (right) is a Cauchy
 polygon. (A Cauchy (bar) polygon
on $v$ vertices
is a planar framework where the vertices $p(1),...,p(v)$, in order,
form a strictly convex polygon in the plane, and the edge set
consists of the edges $\{i, {i+1}\}$, $i=1,2,...,v$, and
$\i, {i+2}\}$, $i=1,...,v-2$ 
(indices modulo $v$). 
A Cauchy polygon is universally rigid~\cite{Connelly82:RigidityEnergy}).
}
\end{figure}

Corollary~\ref{cor:ur} 
can be generalized beyond the Euclidean case
to apply to much larger set of groups and monoids.\footnote{Thanks to 
Louis Theran for suggesting we look at this generality.}

\begin{theorem}
Let $M$ be a 
family of monoids  $M_d$ (as in Definition~\ref{def:groups})
with 
each $M_d$ a submonoid  of $\Aff(d)$.
Let $(p,\Theta) \in C^d(\Theta)$ 
be a framework with 
some  hyperedge $h_0$ in $\Theta$ whose vertex positions in~$p$
affinely span $\RR^d$.
If 
$(p,\Theta)$ 
is 
affinely rigid in $\RR^d$
then 
$(p,\Theta)$ 
is universally $M$-rigid.
\end{theorem}

\begin{proof}
Let $q \in C^{d'}(\Verts)$ 
be a configuration with $d' > d$
such that 
$(p,\Theta)$ is $M$-preequivalent in $\RR^{d'}$ to 
$(q,\Theta)$. 
Then $(p,\Theta)$ is affinely preequivalent in $\RR^{d'}$ to 
$(q,\Theta)$. 
Since $(q,\Theta)$ is affinely rigid in $\RR^d$, 
from Corollary~\ref{cor:uar}, we have that 
$p$ is affinely precongruent to 
$q$ in $\RR^{d'}$; there is 
an $A \in \Aff(d')$ such that $A(p)=q$.

By the  assumption of $M$-preequivalence, for each hyperedge~$h$ there is 
an element $g_h \in M_{d'}$ which maps the  vertices of $h$
from their positions in $p$ to their positions in $q$.
Since $M_{d'}$ is 
a subgroup of $\Aff(d')$, and the specific hyperedge $h_0$ 
has $d+1$ vertices in general position in the configuration $p$,
the action of $g_{h_0}$ on $\RR^d$
is fully determined by these vertices and must agree with the action
of $A$ on $\RR^d$. 
Thus $g_{h_0}(p) = A(p) = q$, making $p$ $M$-precongruent to $q$, 
and making
$(p,\Theta)$ 
universally $M$-rigid.
\end{proof}

%%% Local Variables: 
%%% mode: latex
%%% TeX-master: "AffineRigidity"
%%% End: 

\section{Neighborhood affine rigidity}

In this section we prove the following theorem about the 
generic neighborhood affine 
rigidity of a graph.

\begin{theorem}
\label{thm:dcon}
Let $\Gamma$ be a graph.
If $\Gamma$ is $(d+1)$-vertex-connected, then $\Gamma$
is generically 
neighborhood affinely rigid in $\RR^d$.
\end{theorem}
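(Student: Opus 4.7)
The plan is to route through the intermediate property GNSESM from the summary diagram. By Corollary~\ref{cor:arGen}, affine rigidity in $\RR^d$ is a generic property, so it suffices to exhibit a \emph{single} generic framework $(p, N(\Gamma))$ that is affinely rigid. I will build a witness affinity matrix for this framework directly from a non-symmetric equilibrium stress matrix for $(p, \Gamma)$ of maximal rank, and then read off the rigidity from its corank via the argument underlying Theorem~\ref{thm:ranktest}.

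Concretely, I would first invoke Proposition~\ref{prop:nss}: since $\Gamma$ is $(d+1)$-connected and has at least $d+1$ vertices, for a generic configuration $p \in C^d(\Verts)$ there is a non-symmetric equilibrium stress matrix $\Omega$ for $(p, \Gamma)$ of rank $v - d - 1$, that is, of corank exactly $d + 1$. Next I would verify that $\Omega$ is already an affinity matrix for $(p, N(\Gamma))$: the $i$-th row of $\Omega$ is supported on $\{i\} \cup N_\Gamma(i)$, which is precisely the $i$-th hyperedge of $N(\Gamma)$; its entries sum to zero by the definition of the diagonal; and each coordinate vector of $p$ lies in its kernel by the equilibrium condition. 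Finally, since a generic configuration on at least $d+1$ vertices has full $d$-dimensional affine span, the coordinate vectors of $p$ together with the all-ones vector $\vec 1$ span an independent $(d+1)$-dimensional subspace of $\ker \Omega$, which by the rank hypothesis must already be all of $\ker \Omega$. By Lemma~\ref{lem:affinity-equiv}, any framework $(q, N(\Gamma))$ affinely equivalent to $(p, N(\Gamma))$ has each coordinate vector of $q$ lying in $\ker \Omega$, hence in the affine span of the coordinates of $p$, so $q$ is an affine image of $p$ and therefore affinely congruent. Thus $(p, N(\Gamma))$ is affinely rigid.

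The main obstacle is Proposition~\ref{prop:nss} itself: showing that $(d+1)$-connectivity forces the generic existence of a non-symmetric equilibrium stress matrix of this maximal rank. I expect that is where the combinatorial hypothesis does the real work, since the rows of $\Omega$ can be built one vertex at a time (each is an equilibrium condition at a single vertex and hence an independent linear system in the edge stresses), and one must then use $(d+1)$-connectivity to rule out rank drops coming from small vertex separators or from the image of the stress map sitting in a low-dimensional subspace. By contrast, once Proposition~\ref{prop:nss} is granted, converting a low-corank non-symmetric stress matrix into an affine-rigidity certificate for the neighborhood hypergraph is essentially bookkeeping through the definitions of affinity matrix, equilibrium stress, and $N(\Gamma)$, together with the kernel-dimension argument at the heart of Theorem~\ref{thm:ranktest}.
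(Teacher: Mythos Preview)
Your proposal is correct and follows essentially the same route as the paper: invoke Proposition~\ref{prop:nss} to obtain a non-symmetric equilibrium stress matrix of corank $d+1$, observe that such a matrix is an affinity matrix for $(p,N(\Gamma))$, and then run the ``first direction'' of Theorem~\ref{thm:ranktest} to conclude affine rigidity---this last step is exactly the content of Proposition~\ref{prop:ar}, which you have reproduced inline. The appeal to Corollary~\ref{cor:arGen} is harmless but unnecessary, since Proposition~\ref{prop:nss} already delivers the maximal-rank stress matrix for \emph{every} generic~$p$, not just one.
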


The strategy to prove this theorem is as follows.
First we show, using a ``rubber band'' 
construction~\cite{LLW88, Connelly82:RigidityEnergy, tutte},
that a sufficiently connected graph must
have a framework with certain nice geometric properties.
Moreover, these geometric properties are stable under generic
perturbations of the configuration.
Then we show that any such  framework must have a 
``non-symmetric equilibrium stress matrix'' of appropriate high rank.
Since the perturbed framework is generic, then any 
generic framework must have such a matrix. This matrix 
serves as a certificate
of neighborhood affine rigidity.

\begin{definition}
  \label{def:stress-matrix}
  An \emph{equilibrium stress matrix} of a framework~$(p,\Gamma)$
of a graph in $C^d(\Gamma)$
is a matrix~$\Omega$
  indexed by $\Verts\times \Verts$ so that
  \begin{enumerate}
  \item for all $u,w \in \Verts$, we have $\Omega(u,w) = \Omega(w,u)$;
  \item for all $u,w \in \Verts$ with $u \ne w$ and $\{u,w\} \not\in \Edges$,
    we have $\Omega(u,w) = 0$;
  \item for all $u \in \Verts$, we have $\sum_{w\in\Verts} \Omega(u,w) = 0$; and
  \item for all
    $u \in \Verts$, we have
    $\sum_{w\in\Verts} \Omega(u,w)p(w) = 0$.
  \end{enumerate}
  A \emph{non-symmetric equilibrium stress matrix} of a framework 
$(p,\Gamma)$ is a matrix that satisfies properties (2)--(4) above.
\end{definition}

Observe first that an equilibrium stress matrix (symmetric or not)
$\Omega$ of
$(p,\Gamma)$ is an affinity matrix
of $(p,N(\Gamma))$.
From the properties of affinity matrices,
the kernel of $\Omega$
always contains the subspace
spanned by the coordinates of~$p$ along each axis and the
vector~$\vec 1$ of all $1$'s.

\begin{definition}
We say that a framework 
of a graph in $C^d(\Gamma)$
has the \emph{convex containment} property if
\begin{enumerate}
\item the configuration of each vertex along with its neighboring vertices
has an affine span of 
  dimension~$d$, and
\item 
Almost every  vertex in the framework
is contained in the strict $d$-dimensional 
convex hull of its neighbors.
There are may be up to $d+1$, so-called, \emph{exceptional vertices}  which
do not have this property.
\end{enumerate}
\end{definition}

\begin{lemma}
\label{lem:rubber}
Let $\Gamma$ be a graph with at least $d+1$ vertices.
Suppose $\Gamma$ is $(d+1)$-connected.
Then there exists a generic framework 
$(q,\Gamma)$ in $C^d(\Gamma)$ with the
convex containment property.
\end{lemma}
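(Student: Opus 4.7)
The plan is to invoke the Tutte--Linial--Lov\'asz--Wigderson rubber-band construction in $\RR^d$. First I would pick $d+1$ vertices $v_0,\dots,v_d$ of $\Gamma$ to serve as the exceptional vertices, and pin them at the vertices of a non-degenerate $d$-simplex in $\RR^d$. For every other vertex $w$ I would impose the rubber-band equation
\[
p(w) \;=\; \frac{1}{\deg(w)} \sum_{u\,:\,\{u,w\}\in\Edges(\Gamma)} p(u).
\]
This is a linear system in the positions of the non-exceptional vertices, and standard arguments (diagonal dominance, or minimization of the strictly convex spring energy $\sum_{\{u,w\}\in\Edges}\|p(u)-p(w)\|^2$ subject to the pinning) show it has a unique solution $q_0\in C^d(\Gamma)$.

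Next I would apply the theorem of Linial, Lov\'asz, and Wigderson on rubber-band embeddings of $(d+1)$-connected graphs: because $\Gamma$ is $(d+1)$-vertex-connected, every non-exceptional vertex of $(q_0,\Gamma)$ lies in the \emph{strict} $d$-dimensional interior of the convex hull of its neighbors. The combinatorial input here is that a separating hyperplane for some non-exceptional vertex $w$ would, via Menger's theorem, produce a vertex cut of size at most $d$ separating $w$ from the simplex of exceptional vertices, contradicting $(d+1)$-connectivity. As a consequence, the neighbors of each non-exceptional vertex already affinely span $\RR^d$ in $q_0$, so both conditions of convex containment hold in $q_0$ at every non-exceptional vertex.

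The two remaining tasks --- ensuring the one-rings of the exceptional vertices $v_i$ affinely span $\RR^d$, and arranging for the configuration to be generic --- are handled simultaneously by a small perturbation. Condition~(2) is an open condition, since the strict barycentric inequalities expressing interior-of-convex-hull membership persist under sufficiently small perturbations. Condition~(1) asks for the non-vanishing of finitely many determinants in the coordinates, so it holds automatically at any generic configuration. Since generic configurations are dense in $C^d(\Verts)$, I can select a generic $q$ in a neighborhood of $q_0$ small enough that~(2) is preserved; the framework $(q,\Gamma)$ then has the convex containment property.

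The main obstacle is the LLW step: it is the unique place where the $(d+1)$-connectivity hypothesis actually enters, and establishing strict convex position of the rubber-band solution requires the genuinely combinatorial Menger-style argument rather than a purely linear-algebraic one. Everything else is either linear algebra (existence/uniqueness of $q_0$) or a routine open-and-dense genericity argument.
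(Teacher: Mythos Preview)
Your overall strategy---rubber-band embedding, invoke LLW, then perturb to a nearby generic configuration---is exactly the paper's approach, and your shortcut of obtaining condition~(1) from genericity at the end (rather than at the rubber-band stage) is a legitimate variation.

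There is, however, a real gap at the LLW step. With \emph{unit} edge weights the strict-interior conclusion you attribute to LLW can fail. Take $d=2$ and the $3$-connected graph on $\{v_0,v_1,v_2,a,a',c,w\}$ with adjacencies
\[
a,a'\sim v_1,v_2,w,c,\qquad w\sim v_0,c,a,a',\qquad v_0\sim w,v_1,v_2,
\]
pinning $v_0=(0,1)$, $v_1=(-1,0)$, $v_2=(1,0)$. Because $a$ and $a'$ have identical neighbor sets, the unit-weight rubber-band solution forces $p(a)=p(a')$; solving the averaging equations gives $p(a)=p(a')=(0,\tfrac{2}{13})$, $p(c)=(0,\tfrac{3}{13})$, $p(w)=(0,\tfrac{5}{13})$. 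All four neighbors of $w$ lie on the line $x=0$, so $w$ is \emph{not} in the strict $2$-dimensional interior of $\conv(N(w))$, and condition~(2) fails at $q_0$---so the openness argument cannot get started.

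Your Menger sketch actually proves only the weaker fact that each non-pinned vertex lies in the \emph{open simplex} spanned by the pinned vertices: if a barycentric coordinate of $w$ vanished, the $d$ pinned vertices on the corresponding facet would separate $w$ from the remaining pinned vertex. That argument says nothing about the affine span of $N(w)$ itself, which is what condition~(2) requires.

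The fix is exactly what the paper does: use \emph{generic} edge weights (and generic pinned positions). LLW then yields the stronger general-position conclusion that no $d+1$ vertices of the rubber-band embedding lie on a common hyperplane; since every vertex has at least $d+1$ neighbors, their convex hull is full-dimensional and the strict-interior property follows from the rubber-band equation. With this one change, the remainder of your argument goes through unchanged.
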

\begin{proof}
Pick any $d+1$  vertices to be exceptional.
Constrain the exceptional vertices to fixed generic positions in
$\RR^d$ (at the vertices of a simplex). Associate generic positive
weights~$\omega_{ij}$
with each undirected edge $ij$. Find the ``rubber band''
configuration consistent with the constrained vertices and these
weights. Namely, find a framework  $(r,\Gamma)$ so that
each non-exceptional vertex is the weighted linear average of its
neighbors:
\[
\sum_{j \in N(i)} \omega_{ij} (r(i)-r(j))=0,
\]
where
$N(i)$ are the neighbors of vertex $i$.  This involves solving $d$
systems of linear equations, one for each component of~$r$. 
Note that the resulting configuration~$r$ may not be generic.

From~\cite{LLW88},
we know that if $\Gamma$ is $(d+1)$-connected and the constraints on
the exceptional vertices and the edge weights~$\omega$ are generic,
then no set of $d+1$ vertices in $r$ will be
contained in a $(d-1)$-dimensional affine plane, giving us the
first condition.

By construction, any non-exceptional  vertex in $(r,\Gamma)$ must be
contained in the convex hull of its neighbors. 
Again, from~\cite{LLW88}, 
the convex containment must be  strict.

Finally, we perturb each vertex in $\RR^d$
to obtain a generic configuration in $q \in C^d(\Verts)$.
By the first convex containment condition, the convex hull of the
neighbors of a vertex has non-empty interior, so a sufficiently small
perturbation will maintain both conditions.
\end{proof}

\begin{definition}
Suppose that $(q,\Gamma)$ has the convex containment property
and $\Omega$ is a non-symmetric
equilibrium stress matrix for $(q,\Gamma)$. We call a row
of~$\Omega$ \emph{non-exceptional} if its corresponding vertex 
is in the strict $d$-dimensional 
convex hull of its neighbors.
\end{definition}

\begin{lemma}
\label{lem:positivity}
Let $\Gamma$ be a graph with at least $d+1$ vertices.
Suppose $\Gamma$ is a $(d+1)$-connected graph, and we have a framework
$(q,\Gamma)$ in
$C^d(\Gamma)$ with the convex containment property.
Then there is a non-symmetric equilibrium
stress matrix $\Omega$ of $(q,\Gamma)$, such that
for every non-exceptional row $i$,
we have the following property:
If there is an edge connecting vertex $i$ and vertex $j$,
then $\Omega_{ij}$ is positive.
\end{lemma}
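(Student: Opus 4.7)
The plan is to build the desired non-symmetric equilibrium stress matrix $\Omega$ row by row, using the convex containment property to supply positivity on non-exceptional rows and choosing the trivial zero row for exceptional vertices.

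For each non-exceptional vertex $i$, the first step is to observe that, by the convex containment property, $q(i)$ lies in the topological interior of the $d$-dimensional convex hull of $\{q(j) : j\in N(i)\}$. I claim this implies the existence of \emph{strictly positive} weights $\lambda_{ij}>0$ with $\sum_{j\in N(i)}\lambda_{ij}=1$ and $q(i)=\sum_{j\in N(i)}\lambda_{ij}\,q(j)$. The centroid $c=\tfrac{1}{|N(i)|}\sum_{j\in N(i)}q(j)$ has all positive barycentric coefficients, and since $q(i)$ is interior, there is $\alpha\in(0,1]$ and a point $q'$ in the convex hull with $q(i)=\alpha c+(1-\alpha)q'$; expanding $q'$ as any nonnegative convex combination of the $q(j)$'s and combining with $c$'s representation yields an all-positive convex combination for $q(i)$.

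Given such $\lambda_{ij}$, I would define the $i$-th row of $\Omega$ by
\[
\Omega_{ij} := \lambda_{ij} \text{ for } j\in N(i), \qquad \Omega_{ii} := -1, \qquad \Omega_{ij} := 0 \text{ otherwise},
\]
and for each of the (at most $d+1$) exceptional vertices $i$, I would simply take the $i$-th row of $\Omega$ to be the zero row. Conditions (2)--(4) of Definition~\ref{def:stress-matrix} are then immediate: the support of each row lies in $\{i\}\cup N(i)$, each row sums to zero since $-1+\sum_j\lambda_{ij}=0$, and $\sum_j\Omega_{ij}\,q(j)=-q(i)+\sum_{j\in N(i)}\lambda_{ij}\,q(j)=0$ by the choice of the $\lambda$'s. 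The required positivity $\Omega_{ij}>0$ on non-exceptional edges $ij$ holds by construction.

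There is no serious obstacle in this argument; it is essentially an unpacking of the word ``strict'' in the convex containment property into existence of all-positive barycentric coordinates. Note that symmetry of $\Omega$ is neither sought nor expected: the weight $\lambda_{ij}$ chosen on the row of a non-exceptional vertex $i$ bears no relation to the weight $\lambda_{ji}$ on the row of $j$, and if $j$ happens to be exceptional then $\Omega_{ji}=0$ even though $\Omega_{ij}>0$.
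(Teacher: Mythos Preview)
Your proof is correct and follows essentially the same approach as the paper's: build $\Omega$ row by row, using positive barycentric coordinates (guaranteed by strict convex containment) for non-exceptional vertices. The only cosmetic difference is that the paper fills in the exceptional rows with arbitrary (not necessarily positive) barycentric weights rather than zeros, but since the lemma imposes no condition on those rows, your zero-row choice is equally valid.
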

\begin{proof}
All vertices have $d+1$ or more neighbors. 
For each vertex~$i$, we can therefore find ``barycentric
coordinates'': non-zero edge weights~$\omega_{ij}$ on the adjoining edges so
that
\[
\sum_{j \in N(i)} \omega_{ij} (q(j) - q(i)) = 0.
\]
If $i$ is a non-exceptional vertex, 
due to the convex containment property we can choose the
$\omega_{ij}$ to be positive.  We then choose $\Omega_{ij} =
\omega_{ij}$ for $i \ne j$ and $\Omega_{ii} = -\sum_j \omega_{ij}$.
\end{proof}

\begin{remark}
This lemma is false if we require the stress matrix to be symmetric,
because this prevents us from choosing $\omega_{ij}$ and $\omega_{ji}$
independently.
\end{remark}

\begin{lemma}
\label{lem:convex-rank} 
Let $\Gamma$ be a graph with at least $d+1$ vertices.
Suppose $\Gamma$ is $(d+1)$-connected, and
we have a framework $(q,\Gamma)$ in $C^d(\Gamma)$ with the 
convex containment property.
Then there is a non-symmetric
equilibrium stress matrix $\Omega$ of $(q,\Gamma)$ with co-rank
$d+1$.
\end{lemma}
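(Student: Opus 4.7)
The plan is to exploit Lemma~\ref{lem:positivity}, which supplies a non-symmetric equilibrium stress matrix $\Omega$ whose non-exceptional rows carry strictly positive off-diagonal entries along edges, and then to pin down the corank of $\Omega$ via a discrete maximum-principle argument driven by the connectedness of $\Gamma$.

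For the lower bound $\dim(\ker(\Omega)) \ge d+1$, I would note that $\Omega$ is an affinity matrix for $(q, N(\Gamma))$, so its kernel contains the all-ones vector and each of the $d$ coordinate vectors of~$q$; condition~(1) of the convex containment property guarantees that $q$ affinely spans $\RR^d$, making these $d+1$ vectors linearly independent.

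For the upper bound, partition $\Verts = A \cup E$ with $E$ the $d+1$ exceptional vertices and $A := \Verts \setminus E$, and consider the $A \times A$ submatrix $\Omega_{AA}$. If $\Omega_{AA}$ is invertible, the block equation $\Omega_{AA} x_A + \Omega_{AE} x_E = 0$ shows that any $x \in \ker(\Omega)$ is determined by $x_E$, yielding $\dim(\ker(\Omega)) \le |E| = d+1$. To prove $\Omega_{AA}$ is nonsingular, suppose $\Omega_{AA} x = 0$ and extend $x$ to $\Verts$ by setting $x_e = 0$ for $e \in E$. The equilibrium equation at a non-exceptional~$i$, normalized by the positive total weight, then expresses $x_i$ as a \emph{strict} convex combination $\sum_{j \in N(i)} c_{ij} x_j$ of its $\Gamma$-neighbors, with $x_j = 0$ whenever $j \in E$.

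The argument now turns on a maximum principle. If $M := \max_i x_i > 0$, the maximum is attained only at vertices in $A$, and at any argmax vertex~$i$ the strict-convex-combination identity forces $x_j = M$ for every $j \in N(i)$. In particular $i$ has no neighbor in $E$ (else $0 = M$), and all its $\Gamma$-neighbors join the argmax set $S$. Iterating, $S$ is a nonempty subset of $\Verts$ closed under taking \emph{all} $\Gamma$-neighbors, which by connectedness of $\Gamma$ forces $S = \Verts$ --- contradicting $E \ne \emptyset$. The same reasoning applied to $-x$ gives $\min_i x_i \ge 0$, hence $x \equiv 0$. The main subtlety I expect is ensuring that the propagation does not get trapped inside $A$: it is precisely the strict positivity supplied by Lemma~\ref{lem:positivity} that rules out an argmax vertex having an exceptional neighbor, and this is what lets mere connectedness of~$\Gamma$ close the argument.
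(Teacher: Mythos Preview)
Your argument is correct and follows essentially the same route as the paper: both invoke Lemma~\ref{lem:positivity}, restrict to the non-exceptional block $\Omega_{AA}$ (the paper's $\Omega'$), and use the sign pattern together with connectedness of~$\Gamma$ to show this block is nonsingular, whence the corank bound. The only difference is packaging: the paper observes that $\Omega'$ is weakly diagonally dominant with strict dominance on every row whose vertex has an exceptional neighbor, and then cites Varga's theorem on irreducibly diagonally dominant matrices, whereas you unroll that citation into the underlying maximum-principle argument directly.
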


\begin{proof}

From Lemma~\ref{lem:positivity} we find for $(q,\Gamma)$ a 
non-symmetric equilibrium stress matrix~$\Omega$ with the desired
positive entries. We now show that $\Omega$ has the stated rank.

First remove the $d+1$ rows and columns associated with
the exceptional  vertices to create a smaller matrix $\Omega'$.
Due to the sign pattern assumed in $\Omega$, as well as
property~(3) of any equilibrium stress matrix,
$\Omega'$ must be weakly diagonally dominant.

Let us call a vertex  \emph{EN} if it has an exceptional neighbor
and refer to its corresponding row in $\Omega'$ as EN.
Any EN row 
must be
strictly diagonally dominant (since at least one non-zero 
off-diagonal entry of $\Omega$ have been removed from this row).

Since all entries corresponding to edges are non-zero,
the irreducible components of $\Omega'$ correspond to vertex subsets that
remain connected after the exceptional vertices have been removed.
(An irreducible square matrix is one that is not similar via a permutation to
a block upper triangular matrix. Any square matrix has a unique
irreducible decomposition).

Each irreducible component of $\Omega'$ includes
such an EN row, thus $\Omega'$ must be full rank.  (See, e.g.,
\cite[Theorem~1.21]{varga2000matrix}.)

Since $\Omega'$ has co-rank 0, the co-rank of $\Omega$ must be at most
$d+1$. It is no less since any equilibrium stress matrix must have
a $(d+1)$-dimensional kernel spanned by the coordinates of $q$ and
the all-ones vector.
\end{proof}

\begin{proposition}
\label{prop:nss}
Let $\Gamma$ be a graph with at least $d+1$ vertices.
Suppose $\Gamma$ is $(d+1)$-connected, and $p$ is 
generic in $C^d(\Verts)$.  
Then there is a non-symmetric equilibrium stress matrix
$\Omega$ of $(p,\Gamma)$ with co-rank $d+1$.
\end{proposition}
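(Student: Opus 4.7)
The plan is to extend Lemma~\ref{lem:convex-rank}, which produces a non-symmetric equilibrium stress matrix of co-rank exactly $d+1$ for \emph{one} particular generic framework, to every generic framework via an algebraic genericity argument of the same form used in Corollary~\ref{cor:arGen}.

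First, I would set up the correct algebraic family. Let $W$ be the fixed, finite-dimensional space of matrices~$\Omega$ indexed by $\Verts\times\Verts$ whose off-diagonal support lies on edges of $\Gamma$ and whose rows sum to zero (properties~(2) and~(3) of Definition~\ref{def:stress-matrix}). Property~(4) is then a system of linear equations on $\Omega\in W$ whose coefficients are linear in the entries of~$p$. So the space $S(p)$ of non-symmetric equilibrium stress matrices for $(p,\Gamma)$ is the kernel of a linear map $L(p)\colon W\to \RR^{\Verts\times d}$ whose matrix depends polynomially (in fact linearly) on~$p$.

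Next, I would argue that the condition ``there exists $\Omega\in S(p)$ of rank at least $v-d-1$'' is the non-vanishing of a polynomial~$P(p)$ with rational coefficients. This is precisely the content of the genericity tool (Proposition~\ref{prop:genrank}) already invoked in the proof of Corollary~\ref{cor:arGen}: for a linear family of matrices cut out by constraints that depend polynomially on~$p$, the maximum rank attained over the family is a generic property of~$p$, witnessed by the non-vanishing of some polynomial obtained by combining a $(v{-}d{-}1)$-by-$(v{-}d{-}1)$ minor with Cramer-type expressions for a basis of $\ker L(p)$.

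To close, I would evaluate $P$ at the framework furnished by Lemma~\ref{lem:rubber}. That lemma delivers a \emph{generic} $q\in C^d(\Verts)$ with the convex containment property, and Lemma~\ref{lem:convex-rank} then produces an $\Omega\in S(q)$ of co-rank exactly $d+1$, so $P(q)\neq 0$. Since $q$ is generic and $P$ has rational coefficients, $P$ cannot be the zero polynomial, so $P(p)\neq 0$ for every generic $p$. Combined with the universal lower bound on the co-rank of any equilibrium stress matrix (its kernel always contains the $(d{+}1)$-dimensional subspace spanned by the coordinate vectors of~$p$ and~$\vec 1$), this forces the co-rank to equal $d+1$ exactly for every generic~$p$, as claimed. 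The only subtle step is the middle one, namely packaging ``max rank inside a $p$-dependent kernel'' as the non-vanishing of a single polynomial rather than a merely constructible condition; this is exactly what Proposition~\ref{prop:genrank} is designed to deliver, and everything else reduces to citing the two preceding lemmas.
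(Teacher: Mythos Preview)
Your proposal is correct and follows essentially the same route as the paper: invoke Lemma~\ref{lem:rubber} to obtain a generic $q$ with the convex containment property, apply Lemma~\ref{lem:convex-rank} to get a non-symmetric equilibrium stress matrix of co-rank $d+1$ at $q$, and then use Proposition~\ref{prop:genrank} to transfer this to every generic $p$. Your write-up is more explicit about the linear-algebraic setup underlying the appeal to Proposition~\ref{prop:genrank}, but the structure and ingredients are identical to the paper's three-line proof.
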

\begin{proof}
From Lemma~\ref{lem:rubber}, there must exist a generic framework
$(q,\Gamma)$ in $C^d(\Gamma)$
that has the  convex containment property. From
Lemma~\ref{lem:convex-rank}, $(q,\Gamma)$ must have a non-symmetric
equilibrium stress matrix of co-rank $d+1$. Thus from
Proposition~\ref{prop:genrank}, any generic framework 
$(p,\Gamma)$ must
have such a matrix as well.
\end{proof}

\begin{figure}[t]
\center
\includegraphics[width=0.2\textwidth]{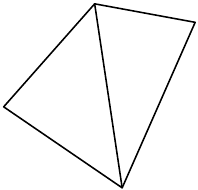}
\caption{\label{fig:book}
This framework in $\RR^2$ is not 3-connected but does have a 
non-symmetric stress matrix of high rank.
}
\end{figure}

See Figure~\ref{fig:book} for an example showing that the converse 
of Proposition~\ref{prop:nss} is not true.
Since the  upper (and lower) vertex in the framework has $3$ neighbors in 
affine general position, its position can be written as an affine combination
of these neighbors producing a non-zero, non-symmetric equillibrium 
stress matrix.
Any non-zero,
non-symmetric 
equillibrium stress matrix must have rank at least $1$
and
co-rank of no more than $3=d+1$.
Thus,
as an euillibrium stress matrix, it has co-rank of exactly $d+1$.
Meanwhile, this framework is not 3-connected.

Note that from the proof of Proposition~\ref{prop:genrank} it is
clear that if $\Gamma$ is $(d+1)$-connected, then \emph{almost
every} non-symmetric stress matrix for almost every $(p,\Gamma)$ 
in $C^d(\Gamma)$ will have co-rank $d+1$. 
Moreover, each row of such a non-symmetric stress
matrix of $p$ can be constructed independently from the
other rows, and we still expect to find this minimal co-rank.

\begin{proposition}
\label{prop:ar}
Let $\Gamma$ be a graph with at least $d+1$ vertices.
Suppose $(p,\Gamma)$, a framework in $C^d(\Gamma)$,
has a non-symmetric equilibrium stress matrix
$\Omega$ that has co-rank $d+1$.
Then $(p,\Gamma)$ is neighborhood affinely rigid in $\RR^d$.
\end{proposition}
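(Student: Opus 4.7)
The plan is to apply Theorem~\ref{thm:ranktest} to the framework $(p, N(\Gamma))$ of the neighborhood hypergraph, since neighborhood affine rigidity of $(p,\Gamma)$ is by definition affine rigidity of $(p,N(\Gamma))$.

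The first step is to recognize $\Omega$ as an affinity matrix of $(p, N(\Gamma))$, a fact already flagged in the paragraph immediately following Definition~\ref{def:stress-matrix}. The $i$th row of $\Omega$ has nonzero entries exactly on $\{i\} \cup N(i)$, which is precisely the $i$th hyperedge of $N(\Gamma)$; property~(3) of non-symmetric equilibrium stress matrices gives zero row sums; and property~(4) says that $\Omega$ annihilates each coordinate vector of $p$.

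The second step is to compare $\Omega$ to an arbitrary strong affinity matrix $M$ for $(p, N(\Gamma))$. Each row of $\Omega$ is a single homogeneous affine relation supported on one hyperedge of $N(\Gamma)$, and by definition the rows of $M$ span all such relations within every hyperedge, so every row of $\Omega$ lies in the row span of $M$. Hence $\ker M \subseteq \ker \Omega$. Since $p$ is (proper) $d$-dimensional, $\ker M$ also contains the $(d+1)$-dimensional subspace spanned by the coordinate vectors of $p$ together with $\vec 1$. Combining these yields $d+1 \le \dim \ker M \le \dim \ker \Omega = d+1$, so $\dim \ker M = d+1$ exactly, and Theorem~\ref{thm:ranktest} gives affine rigidity of $(p, N(\Gamma))$ in $\RR^d$ as required.

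The only bit that requires a moment of care is the row-span containment $\ker M \subseteq \ker \Omega$, which reduces to the structural observation that each row of $\Omega$ is supported on a single one-ring hyperedge of $N(\Gamma)$; everything else is an essentially formal application of Theorem~\ref{thm:ranktest} once $\Omega$ has been identified as an affinity matrix for the correct hypergraph. (If one wishes to be pedantic about properness of $p$, it is automatic in the relevant applications, so this does not present a real obstacle.)
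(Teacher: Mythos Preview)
Your proof is correct and follows the same core idea as the paper: recognize $\Omega$ as an affinity matrix for $(p,N(\Gamma))$ and exploit that its kernel has dimension exactly $d+1$.

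The only difference is a small detour you take. The paper observes that the forward direction of Theorem~\ref{thm:ranktest} (kernel of dimension $d+1$ $\Rightarrow$ rigid) uses only the forward half of Lemma~\ref{lem:affinity-equiv}, which holds for \emph{any} affinity matrix, not just strong ones. So the paper applies that argument directly to $\Omega$ and is done in one line. You instead invoke Theorem~\ref{thm:ranktest} as a black box, which forces you to introduce a strong affinity matrix $M$ and argue $\ker M \subseteq \ker \Omega$ via row-span containment. That step is fine, but it is not needed: once you know $\Omega$ is an affinity matrix with $(d+1)$-dimensional kernel, Lemma~\ref{lem:affinity-equiv} already forces every affinely equivalent $q$ to have coordinates in that kernel, hence to be an affine image of $p$. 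Your remark about properness applies equally to both arguments.
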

\begin{proof}
$\Omega$ is a (not strong) affinity matrix of $(p,N(\Gamma))$
and so the proof follows that of the first direction
of Theorem~\ref{thm:ranktest}.
\end{proof}

\begin{proof}[Proof of Theorem~\ref{thm:dcon}]
The theorem now follows directly from 
Propositions~\ref{prop:nss} and~\ref{prop:ar}. 
If $\Gamma$ has less than $d+1$ vertices and is
$(d+1)$-connected, then it is a simplex and thus 
neighborhood affinely rigid for any configuration.
\end{proof}

\begin{figure}[t]
\center
\includegraphics[width=0.2\textwidth]{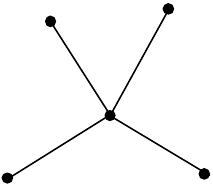}
\caption{\label{fig:spider}
This framework in $\RR^2$ 
does not have a non-symmetric 
equilibrium stress matrix of co-rank $d+1=3$, but is
(trivially) neighborhood affinely rigid.
}
\end{figure}

See Figure~\ref{fig:spider} for an example showing that the converse 
of Proposition~\ref{prop:ar}
is not true.
The framework is clearly 
neighborhood  affinely rigid since the central vertex is adjacent to
all of the other vertices. Meanwhile the outer 4 vertices have
only one neighbor and hence must have all zeros in their corresponding
rows of any
non-symmetric equilibrium stress matrix.
Thus, this framework  cannot have 
a  non-symmetric 
equilibrium stress matrix of co-rank $d+1=3$.

\begin{remark}
Generic global rigidity of a graph~$\Gamma$ in $\EE^d$
can be characterized either using the dimension of the kernel
of a single symmetric stress matrix of a generic framework $(p,
\Gamma)$ or
using the dimension of the \emph{shared symmetric
stress kernel} of a generic~$p$: the intersection of the kernels of
all stress matrices of~$p$~\cite{GHT10}.

By contrast, the analogous statement is not true in the affine
rigidity case.  By ``vertically concatenating'' a sufficient number of
non-symmetric equilibrium stress matrices of $(p,\Gamma)$, we can
create a strong affinity matrix for $(p,N(\Gamma))$.  The kernel of
the vertical concatenation will be the shared non-symmetric stress
kernel of $(p,\Gamma)$, and the dimension of this kernel
characterizes affine rigidity.
Since the converse of Proposition~\ref{prop:ar}
is not true, we see that 
neighborhood affine rigidity cannot in general be characterized by
the rank of one single  (say, generic)
non-symmetric equilibrium
stress matrix for $(p,\Gamma)$.
\end{remark}

Note that there is a different sufficiency condition for 
affine rigidity given by Zha and Zhang~\cite{ZZ09}. 
Their condition is complementary to
our condition (neither strictly stronger or weaker), and (like
trilateralization~\cite{eren2004rigidity}) 
is greedy in nature.  Their condition on generic
frameworks of a hypergraph
requires that for each pair of vertices  $s$ and $t$,
one can find a sequence of hyperedges starting with some hyperedge containing
$s$ and ending with some hyperedge containing $t$, 
such that for each pair $(i,j)$ of hyperedges in the sequence,
$h_i$ and $h_j$ share at least $d+1$ vertices.  When translated
to a neighborhood hypergraph $N(\Gamma)$,
it states that one can walk between any two vertices along edges, such
that for each pair  $(i,j)$ of vertices along the walk,
the neighborhoods of these vertices share at least $d+1$ vertices
in $\Gamma$.

Figure~\ref{fig:hcomb} shows a graph which clearly fails Zha and
Zhang's condition, but is 3-connected, showing that their condition
does not imply Theorem~\ref{thm:dcon}.  It is not hard to construct
examples in the opposite direction, as well.

\begin{figure}[t]
\center
\includegraphics[width=.5\textwidth]{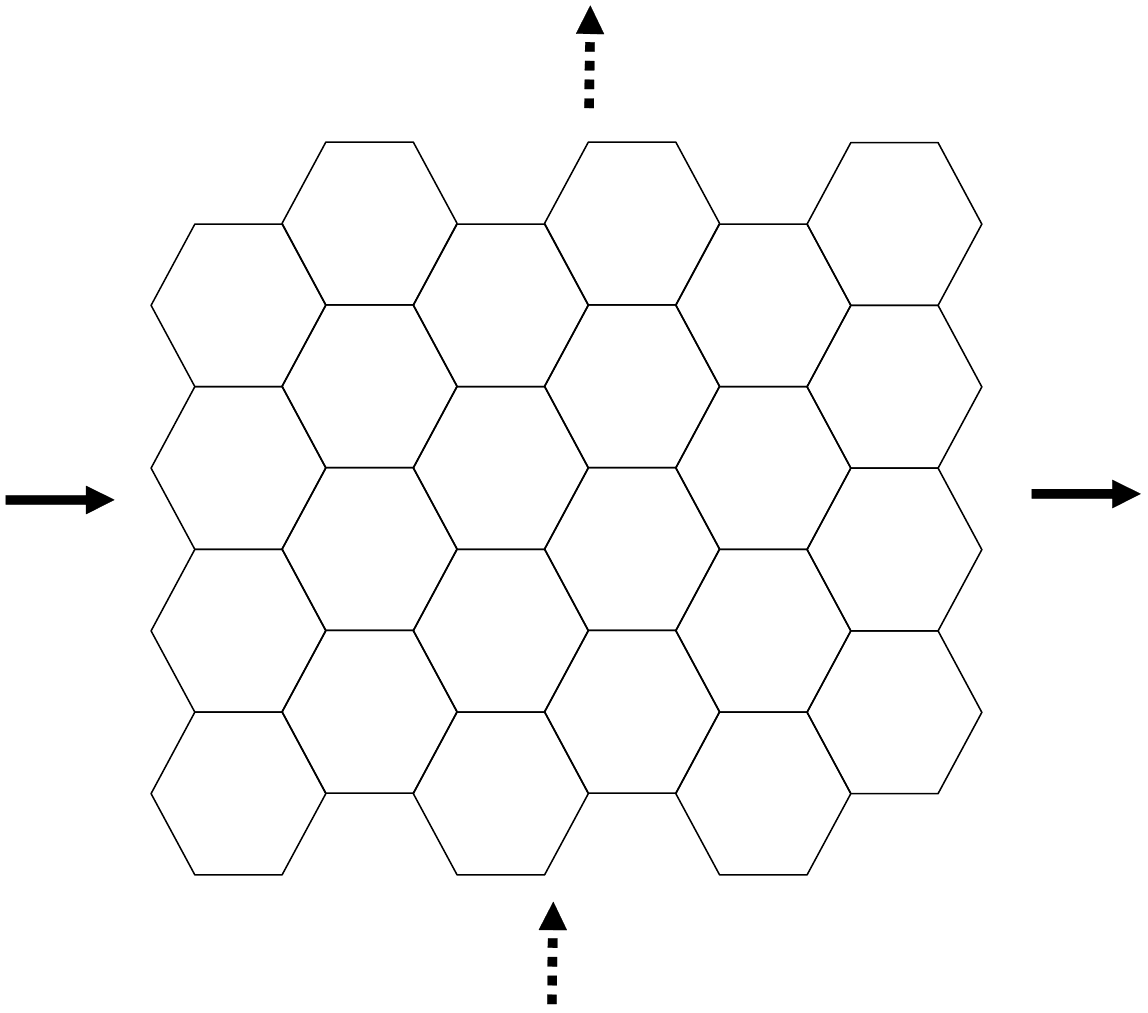}
\caption{\label{fig:hcomb}
A drawing of a hexagonal lattice on the torus.  (The vertices on the
top edge should be identified with those on the bottom and similarly
with the left and right, as indicated by the arrows.)
This graph is $3$-connected 
but its neighborhood hypergraph 
does not satisfy the sufficiency condition of Zha and
Zhang~\cite{ZZ09}, and its
squared graph is not a 2-trilateralization graph.
}
\end{figure}

We also have the following
corollary of Theorem~\ref{thm:dcon}

\begin{corollary}
Let $\Gamma$ be a graph.
If\/ $\Gamma$ is $(d+1)$-connected, then any generic framework of\/ $\Gamma^2$
in $\RR^d$ is universally rigid. 
\end{corollary}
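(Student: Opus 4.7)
The plan is to assemble three results already in hand: Theorem~\ref{thm:dcon} (neighborhood affine rigidity from $(d+1)$-connectivity), Corollary~\ref{cor:ur} (affine rigidity plus enough vertices implies universal Euclidean rigidity), and Lemma~\ref{lem:square-nbhd} (the body graph of a neighborhood hypergraph is the square). The scheme is: start with $\Gamma$ that is $(d+1)$-connected, pass to a generic framework $(p,\Gamma)$, lift it to the neighborhood hypergraph $N(\Gamma)$ where Theorem~\ref{thm:dcon} yields affine rigidity, upgrade to universal Euclidean rigidity via Corollary~\ref{cor:ur}, and finally push down to $\Gamma^2$ using the body-graph identity together with Lemma~\ref{lem:hyper-er}.

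More concretely, first I would separate the trivial edge case: if $\Gamma$ has exactly $d+1$ vertices and is $(d+1)$-connected, then $\Gamma = K_{d+1}$ and $\Gamma^2 = K_{d+1}$, whose generic $d$-dimensional framework is a simplex and hence universally rigid. So assume henceforth that $\Gamma$ has at least $d+2$ vertices, and let $(p,\Gamma)$ be a generic framework in $C^d(\Gamma)$.

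Next, by Theorem~\ref{thm:dcon}, the framework $(p, N(\Gamma))$ is affinely rigid in $\RR^d$. Since $N(\Gamma)$ has the same vertex set as $\Gamma$ and thus at least $d+2$ vertices, Corollary~\ref{cor:ur} applies and tells us that $(p, N(\Gamma))$ is universally Euclideanly rigid. The only mildly subtle point is genericity: the generic framework $p$ chosen for $\Gamma$ is exactly a generic framework for $N(\Gamma)$ (the vertex sets coincide and Definition~\ref{def:generic} depends only on the configuration), so the hypothesis of Corollary~\ref{cor:ur} is met.

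Finally, invoke Lemma~\ref{lem:square-nbhd} to rewrite $B(N(\Gamma)) = \Gamma^2$, and apply Lemma~\ref{lem:hyper-er} in each ambient dimension $d' \geq d$ to transfer Euclidean rigidity from the hypergraph framework $(p, N(\Gamma))$ to the graph framework $(p, \Gamma^2)$. This yields universal Euclidean rigidity of $(p, \Gamma^2)$, completing the proof. I do not anticipate any real obstacle: the whole argument is a short chain of citations, and the only thing to be careful about is the vertex-count hypothesis in Corollary~\ref{cor:ur}, which is handled by peeling off the $|\Verts(\Gamma)| = d+1$ case up front.
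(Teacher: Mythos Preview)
Your proposal is correct and follows essentially the same route as the paper, which simply cites Theorem~\ref{thm:dcon} and Corollary~\ref{cor:ur}; you are actually more careful than the paper in isolating the $\abs{\Verts(\Gamma)} = d+1$ edge case (needed for the vertex-count hypothesis of Corollary~\ref{cor:ur}) and in making the passage from $N(\Gamma)$ to $\Gamma^2$ explicit via Lemmas~\ref{lem:square-nbhd} and~\ref{lem:hyper-er}.
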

\begin{proof}
If $\Gamma$ has at least $d+2$ vertices, then we can directly apply
Corollary~\ref{cor:ur}. Any graph with fewer vertices that 
is $(d+1)$-connected must be a simplex and be universally rigid
for all configurations.
\end{proof}

\begin{remark}
This corollary can also be proven without reference to affine rigidity and
Corollary~\ref{cor:ur}. 
In particular, Proposition~\ref{prop:nss} guarantees a non symmetric
maximal-rank
stress $\Omega$ for 
$(p,\Gamma)$, 
and then $\Omega^t\Omega$ is a symmetric, positive semi-definite,
maximal rank stress
for 
$(p,\Gamma^2)$. Universal rigidity then follows by a theorem of
Connelly~\cite{Connelly82:RigidityEnergy}. 
(See also~\cite{GT10:UGR}.)
\end{remark}

A manuscript by Cheung and Whiteley~\cite{cw} contains a variety of other 
results relating graph powers to global rigidity.

We wish to highlight this corollary since 
the only other known (to us) class of graphs that are universally rigid
for all generic configurations in $\RR^d$ are graphs that can be realized
greedily such as 
the $d$-trilateralization 
graphs 
(A $d$-trilateralization graph is one that can be obtained from a
complete graph by successively adding vertices, each connected to at
least $d+1$ old vertices)
and their generalizations (such as graphs formed by 
gluing together $d$-trilateralization graphs along $d+1$ vertices.

See Figure \ref{fig:hcomb} for an example 
of a framework  whose square is universally rigid by this
corollary but is not a trilateralization graph.

We also mention that a theorem of a related nature, 
showing a relationship between the connectivity of a graph
and global rigidity in the squared graph, has been described 
by Anderson et al.~\cite{anderson2009graphical}.

%%% Local Variables: 
%%% mode: latex
%%% TeX-master: "AffineRigidity"
%%% End: 

\section{Applications}

\subsection{Registration}

There are many applications where one has multiple views of some underlying
configuration, but it is not known how these views all fit together. 
We assume that these views share some points in common, and this
correspondence is known. (Of course in practice, 
establishing such a correspondence could in itself be a very
challenging problem.)
For example, in computer vision, one may have multiple uncalibrated
laser scans of
overlapping parts of some three-dimensional object.

In our setting we model all of the points as vertices, and
each of the views as a hyperedge. 
The geometry of the vertices in each hyperedge $h$ is given
up to some unknown transform~$T_h$ from a relevant class. 
The goal in registration then is
to \emph{realize} the entire hypergraph up to the relevant congruence class.

\paragraph{\bf Affine case:}
Suppose we wish to realize a framework $(p,\Theta)$ where 
we are given as input the geometry of each hyperedge $h$ up to
an affine transform $A_h$. Theorem~\ref{thm:ranktest} tells
us  that if $(p,\Theta)$ is affine rigid, then we can compute 
the realization just using linear algebra. 
In particular, we can use the data for each hyperedge to build
its associated rows in a strong affinity matrix. Then we can solve for its
kernel, giving us our answer $p$.

If our hypergraph $\Theta$ happens to be the neighborhood
 graph of an underlying graph~$\Gamma$,
then one could 
also  construct a (smaller)
non-symmetric equilibrium stress matrix~$\Omega$ for 
$(p,\Gamma)$.  
This is not guaranteed to work; 
even when $(p,\Gamma)$ is neighborhood affinely rigid in~$\RR^d$, 
the matrix~$\Omega$ may have co-rank larger than $d+1$.
But Theorem~\ref{thm:dcon} states that if $\Gamma$ is
$(d+1)$-connected, this method will indeed
work for almost every $p$ in $\RR^d$
(and, in fact, using almost any non-symmetric equilibrium stress matrix for 
$(p,\Gamma)$).

\paragraph{\bf Euclidean case:}

The Euclidean framework registration problem
is perhaps more natural and common.

When $(p,\Theta)$ is globally  rigid in $\RR^d$, this problem is well 
posed, 
but it 
is general hard to solve, as the graph case includes the
graph realization problem
which is strongly NP-HARD~\cite{Saxe79:EmbedGraphsNP}.

When $(p,\Theta)$ is, in fact, also universally rigid there is an
efficient algorithm:
we can solve the Euclidean registration problem 
using semi-definite programming.
One simply sets up the program that 
looks for the Gram matrix of an embedding of the vertices in $\RR^v$
(a semi-definite
constraint on a Gram matrix)
subject to the length constraints (linear constraints on 
the Gram matrix)~\cite{linial1995geometry}.
Due to 
universal rigidity, 
one does not need to explicitly enforce 
the (non-convex) constraint that the embedding 
have a $d$-dimensional affine span~\cite{so2007theory}.

When $(p,\Theta)$ is, furthermore, affinely rigid then
we can
solve the Euclidean registration problem using linear algebra.  We can
simply reduce this problem to an affine registration problem above,
and find $p$ using the kernel vectors of an affinity matrix. This
determines $p$ up to some global affine transform.  Moreover,
for $(p,\Theta)$ that is generically globally rigid in $\RR^d$, we
can solve a second (least squares) linear system 
to remove the
unwanted global affine transform, leaving just the unknown global
Euclidean transform (see Appendix~\ref{sec:rta}). This approach
is morally the same
affine relaxation 
used in the initialization step of the registration method 
of Krishnan et al.~\cite{krishnan2005global} (though in their case,
they think of 
the inter-patch transforms as the unknown variables instead of the 
point positions).

\subsection{Global embeddings from edge lengths}

Similar approaches have been applied to the (NP-HARD) problem of
solving for the 
framework of a graph  given its edge lengths. In these approaches one
first attempts to find local $d$-dimensional embeddings for each
one-ring (a vertex and its neighbors)
of the framework
up to an unknown local Euclidean transform. This step alone
is NP-HARD and can fail. But assuming this step is (approximately)
successful one can reduce the rest of the problem to the Euclidean
registration problem above.  

In the As-Affine-As-Possible (AAAP) method~\cite{koren-patchwork,zhang2009rigid}, this was done using what is essentially 
a strong affinity matrix.  In the Locally-Rigid-Embedding (LRE)
method~\cite{singer2008remark} this was done using a non-symmetric equilibrium stress matrix. Both approaches then 
removed the global affine transform using the 
least squares linear system described 
in  Appendix~\ref{sec:rta}.

\subsection{Manifold learning}

Many of the ideas of affine rigidity
first appeared in the context of manifold learning.
Suppose one has $d$-dimensional smooth manifold $\cM$ 
which is a topological $d$-ball  embedded
in a larger $D$-dimensional space $\RR^D$.  
Also suppose that one has  a set $\Verts$ of sample vertices on the
manifold. 
In manifold learning, one first
connects nearby samples to form a proximity graph $\Gamma$.
One then looks for a framework $(p,\Gamma)$ 
of this graph in~$\RR^d$ that
in some way that preserves some of 
the geometric relations of the points in~$\RR^D$. This is  used to
represent a parametrization of $\cM$.

To compute the coordinates $p$,
the Locally Linear Embedding (LLE) method~\cite{roweis2000nonlinear} 
builds a matrix $\Omega$ with structure similar to
a non-symmetric equilibrium stress matrix. In particular, row $i$
encodes one affine relation between vertex $i$ and its neighbors
in~$\RR^D$. Then (after ignoring the all ones vector) the 
smallest $d$ eigenvectors of 
$\Omega^t \Omega$ are used to form
 the coordinates of $p$ in
$\RR^d$. Unfortunately, since the original embedding is in $\RR^D$, 
for a graph of high enough valence and assuming no noise,
$\Omega$ must have a kernel of size at least $D+1$, which is much larger
than $d+1$.  Thus it is not clear how the numerically smallest $d+1$
eigenvectors will behave. The paper suggests to add 
an additional regularization
term, possibly to address this issue.

A follow up to the
LLE paper~\cite{saul2003think}
describes a PCA-LLE variant where a
$d$-dimensional local PCA is computed to ``flatten'' each one-ring
before calculating its corresponding
row in the matrix $\Omega$. Thus $\Omega$ is designed
to represent $d$-dimensional
affine relations between the points.

The Local-Tangent-Space-Alignment (LTSA) method~\cite{ZZ04} is an interesting 
variant of PCA-LLE.  In this
method, a $v \times v$ matrix $N$ is formed that is the Hessian of a
quadratic energy. 
Thus this matrix plays the role
of a strong affinity matrix. It is in this context that 
Zha and Zhang investigated the rank of this matrix and 
affine rigidity~\cite{ZZ09}. 

In all of these methods, an understanding of affine rigidity is
important. In particular it tells us what the rank of the
computed matrix would be if the original $d$-dimensional manifold was
in fact embedded in $\RR^d$. For example, if such a framework
was not affinely rigid in $\RR^d$, then the
kernel would  be too big, and we would not expect a manifold
learning technique to succeed.
However, in manifold learning
the embedding 
has an affine span greater than $d$ and the analysis becomes more
difficult. 
The kernel of a strong $D$-dimensional affinity matrix is too large,
while the kernel of a strong affinity matrix for the locally flattened
configurations
contains only the 
all-ones vector but 
it is hoped that the numerically next smallest
$d$ eigenvectors are somehow geometrically meaningful.
For an analysis along these lines, see~\cite{SmithHZ08}.

%%% Local Variables: 
%%% mode: latex
%%% TeX-master: "AffineRigidity"
%%% End: 

\appendix
\section{Matrix rank}
For completeness, we recall the necessary material for determining
the generic matrix rank. This material is standard.  For a 
more detailed
treatment, see, e.g., \cite[Section~5]{GHT10}.

We will consider the general setting 
where  there is a set of linear
constraints that must be satisfied by a vector
$m \in \RR^n$.
The entries of~$m$ are then arranged in some fixed manner
as the entries of a matrix $M$, whose rank we wish to understand. 
The linear constraints are described by a 
constraint matrix with $n$ columns:
$C(\prho)$.
Each of the entries
of $C(\prho)$ is defined by 
some polynomial function, with coefficients in~$\QQ$,
of the coordinates of an input
configuration~$\prho$. 
We wish to study the behavior of the rank of $M$ as one changes $\prho$.

We apply this 
in the proof of Corollary~\ref{cor:arGen} where the constraints $C$
specify that
the matrix~$M$ is an affinity matrix for $(\prho,\Theta)$
and in the proof of Proposition~\ref{prop:nss}
where the constraints~$C$ specify that
the matrix $M$ is a non-symmetric equilibrium stress matrix
for $(\prho,\Gamma)$.

\begin{proposition}
\label{prop:genrank} 

Suppose that for some generic
$\prho$, there is a matrix $M$ of rank $s$ consistent with $C(\prho)
m =0$.

Then for all generic $\prho$, there is some matrix $M$ of rank
$\geq s$ consistent with $C(\prho) m = 0$.
\end{proposition}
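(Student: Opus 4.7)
The plan is to show that the condition ``there exists a matrix $M$ with $C(\prho)m=0$ and $\rank(M)\geq s$'' is equivalent to the non-vanishing at $\prho$ of some rational function with coefficients in~$\QQ$; by Definition~\ref{def:generic}, such a condition either fails at no generic~$\prho$ or holds at every generic~$\prho$, so the single generic point supplied by the hypothesis propagates the conclusion everywhere.

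First I would restrict to the locus $U=\{\prho:\rank(C(\prho))=r\}$, where $r$ is the maximum rank attained by $C(\prho)$ as $\prho$ varies. Its complement is the common zero set of all $r\times r$ minors of $C(\prho)$; these minors are $\QQ$-polynomials in the coordinates of~$\prho$, and at least one of them is not identically zero, so $U$ is a non-empty $\QQ$-Zariski-open set and therefore contains every generic~$\prho$. On~$U$ the kernel of $C(\prho)$ has constant dimension $k=n-r$, and by fixing one such non-vanishing $r\times r$ minor and applying Cramer's rule I obtain an explicit basis $v_1(\prho),\dots,v_k(\prho)$ of $\ker C(\prho)$ whose coordinates are rational functions of~$\prho$ with rational coefficients.

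The general element of $\ker C(\prho)$ is then $m=\sum_j t_j v_j(\prho)$, and when its entries are assembled into the matrix $M=M(t,\prho)$, every $s\times s$ minor of~$M$ is a polynomial in~$t$ of degree at most~$s$ whose coefficients are $\QQ$-rational functions of~$\prho$. By hypothesis, there is a generic $\prho_0$ admitting some $M_0\in\ker C(\prho_0)$ with $\rank(M_0)\geq s$, so some specific $s\times s$ minor $\mu(t,\prho)$ is non-zero at the corresponding $(t_0,\prho_0)$. In particular $\mu(t,\prho_0)$ is not the zero polynomial in~$t$, so at least one of its $t$-coefficients $\alpha\in\QQ(\prho)$ satisfies $\alpha(\prho_0)\neq 0$; since $\prho_0$ is generic, $\alpha$ is not the zero rational function.

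To finish, for any other generic~$\prho$, both the numerator and denominator of~$\alpha$ are non-zero $\QQ$-polynomials in the coordinates and therefore do not vanish at~$\prho$, so $\alpha(\prho)\neq 0$. Hence $\mu(t,\prho)$ is not identically zero in~$t$, and some choice of~$t$ produces an $M\in\ker C(\prho)$ whose chosen $s\times s$ minor is non-zero, giving $\rank(M)\geq s$. The one bookkeeping point requiring care, and the reason for working on~$U$, is that the kernel dimension can drop on its complement and invalidate the linear parametrization by~$t$; however, because $U$ is already cut out by $\QQ$-polynomials and contains every generic configuration, this issue dissolves and is the only real obstacle in the argument.
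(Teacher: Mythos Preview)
Your argument is correct. The route, however, differs from the paper's in its mechanics. The paper proves a preliminary lemma (that a polynomial matrix attaining rank $r$ at one point has rank $\ge r$ at every generic point), then augments $C(\prho)$ by auxiliary rows $H$ to a square system $C(\prho,H)m=b$ and uses the adjugate to obtain an explicit \emph{polynomial} solution $\tilde m(\prho,H)$; the associated matrix $\tilde M(\prho,H)$ has rank $s$ at $(\prho_0,H_0)$, hence rank $\ge s$ at every generic $(\prho,H)$, and one descends to generic~$\prho$ by choosing $H$ generically. You instead parametrize $\ker C(\prho)$ directly on the Zariski-open locus of maximal rank via Cramer's rule, obtaining a $\QQ(\prho)$-rational basis and a family $M(t,\prho)$ linear in~$t$; the non-vanishing of a single $t$-coefficient of some $s\times s$ minor then propagates to all generic~$\prho$.

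The two approaches are close cousins: both exploit that the solution set is cut out $\QQ$-rationally in~$\prho$, so that rank conditions are constant on the generic locus. Your version is arguably cleaner in that it avoids the auxiliary rows~$H$ and works entirely over~$\prho$, at the modest cost of handling rational rather than polynomial functions. The paper's adjugate trick buys a globally defined polynomial witness $\tilde M(\prho,H)$, which can be convenient if one later wants explicit degree bounds or specializations. One small stylistic point: your remark ``since $\prho_0$ is generic, $\alpha$ is not the zero rational function'' is true for the trivial reason that $\alpha(\prho_0)\ne 0$; genericity of~$\prho_0$ is not needed there, only later when transferring the non-vanishing to an arbitrary generic~$\prho$.
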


\begin{proof}

To prove this proposition we first need the following lemma.

\begin{lemma}\label{lem:poly-mat}
Let $M(\pi)$ be a matrix whose entries are polynomial functions
with rational coefficients in the
variables $\pi \in \RR^n$.
Let $r$ be a rank achieved by some $M(\pi_0)$.
Then $\rank (M(\pi))\geq r$
for all points $\pi$ that are generic in $\RR^n$.
\end{lemma}

\begin{proof}
The rank of the $M(\pi)$  is
less than~$r$ iff the determinants of all of the
$r\times r$ submatrices
vanish.
Let $\pi_0\in\RR^n$ be a choice of parameters so $M(\pi_0)$ has
rank~$r$.  Then there is an $r \times r$ submatrix $T(\pi_0)$ of
$M(\pi_0)$ with non-zero determinant.  Thus $\det(T(\pi))$ is
a non-zero polynomial of~$\pi$.
For any $\pi$ with $\rank(M(\pi)) < r$, this
determinant must vanish.
Thus, any such $\pi$ cannot be generic.
\end{proof}

Next we recall that for a non-singular $n \times n$ matrix~$\hC$,
\begin{equation}\adj(\hC) = \det(\hC)\hC^{-1},\end{equation}
where $\adj \hC$ is the \emph{adjugate
 matrix} of $\hC$, the conjugate of the cofactor matrix of $\hC$.
In particular,
$\adj{\hC}$ is a polynomial in~$\hC$.

For a given $\prho$, let $t(\prho)$ be the rank of $C(\prho)$.
Let $t:= \max_{\prho} t(\prho)$.  By Lemma~\ref{lem:poly-mat} this
maximum is obtained for generic~$\prho$.

For each $\prho$
we add a set~$H$ of $n-t$ additional rows
to $C(p)$ to obtain
a  matrix $C(\prho,H)$, and
determine  $m$
by solving the  linear system $C(\prho,H) m = b$
where $b \in \RR^n$ is a vector of all zeros except for a
single~$1$ in one of the positions of a row in~$H$ (if there are any
rows in~$H$).
This $m$ is
then converted to a matrix $M(\prho,H)$.
$M(\prho,H)$ is well-defined iff this linear system
has a unique solution, i.e., iff $C(\prho,H)$ has rank~$n$.  Note that
this happens for generic $\prho$ and~$H$.

Let $\prho_0$ be generic and have a compatible matrix $M_0$
with rank~$s$, as in the hypotheses of the proposition.
Find a set~$H_0$ of additional rows so that $C(\prho_0,H_0)$ has
rank~$n$ and $C(\prho_0,H_0)m_0 = b$.  Let $\hC(\prho,H)$ be an $n
\times n$ submatrix of $C(\prho,H)$ so that $\hC(\prho_0,H_0)$ is
invertible.  ($\hC$ necessarily uses $t$ rows from~$C(\prho)$ and
all rows of~$H$.)
Define
$\hb$ similarly, let $\tm(\prho,H) := \adj(\hC)\hb$, and let
$\tM(\prho,H)$ be
the associated matrix.

By Lemma~\ref{lem:poly-mat}, the rank of $\hC(\prho,H)$ is equal to its
maximum value, $n$, at all points $(\prho,H)$ that are not zeros of a
polynomial $P_1(\prho, H) := \det \hC(\prho,H)$.
Moreover, when $P_1(\prho, H) \ne 0$, the linear equation defining
$M(\prho,H)$ has a unique solution and the adjugate
matrix $\tM(\prho, H)$ is a scalar multiple of $M(\prho,H)$.
In
particular
we have assumed $(\prho_0, H_0)$
is not a zero of $P_1$ and thus
$\tM(\prho_0, H_0)$ has rank~$s$.  By
Lemma~\ref{lem:poly-mat} again, the rank of $\tM(\prho,H)$
is less than $s$ only
at the zeros of a non-zero polynomial $P_2(\prho, H)$.

For any generic $\prho$, there must be some generic point
$(\prho,H)$. Such a generic $(\prho,H)$ cannot be a zero of $P_1$ or
$P_2$ and thus $\tM(\prho,H)$ and $M(\prho,H)$ must have rank no
less than~$s$.
\end{proof}

%%% Local Variables: 
%%% mode: latex
%%% TeX-master: "AffineRigidity"
%%% End: 

\section{Removing the Affine Transform}
\label{sec:rta}

Suppose one has solved for $\psigma$ -- 
a configuration  in $\RR^d$ -- up to
an unknown global affine transform~$A$
of the true configuration $\prho$: $\prho = A (\psigma)$.
Given a set of edge lengths 
for $\prho$, it is possible 
to compute $A$ up to an unknown global Euclidean transform.
This approach was described by Singer~\cite{singer2008remark}.

In particular, 
let $L$ be a $d \times d$ matrix representing the 
linear portion of $A$ and let $G:=L^TL$.  
Now consider the following set of linear
equations (in the $\frac{d(d+1)}{2}$ unknowns of $G$):
For each pair of vertices $i,j$, whose edge lengths are known, 
we require
\begin{equation}
\label{coninf}
  (\psigma(i)-\psigma(j))^T G(\psigma(i)-
\psigma(j))= 
  (\prho(i)-\prho(j))^T (\prho(i)-
\prho(j)) 
\end{equation}
(Since we have more constraints than unknowns, for numerical purposes
we would solve this as a least squares linear system in the unknown $G$.)

The only remaining concern is whether this system has more than
one solution.  
The solution to Equation~\eqref{coninf} will be unique
as long as our set of edges with known lengths do not lie
on a conic at infinity. 

Fortunately, we can conclude from Proposition 4.3
of~\cite{Connelly05:GenericGlobalRigidity} that 
if our known lengths form a graph $B(\Theta)$
with minimal valence at least $d$
and $\prho$ is generic, then the edges  do not lie on a conic at infinity. 
This property holds
for any  hypergraph that is generically globally rigid in $\EE^d$.

Using Cholesky decomposition on $G$ then yields $L$ up to a global
Euclidean transform.

%%% Local Variables: 
%%% mode: latex
%%% TeX-master: "AffineRigidity"
%%% End: 

\bibliographystyle{hacm}
\bibliography{lle,graphs}
\end{document}